\documentclass{article}

\usepackage[
  backend=biber,
  style=ieee,
  sortlocale=en_EN,
  natbib=true,
  url=false,
  doi=true,
  eprint=false
]{biblatex}
\addbibresource{bib-databases/references.bib}

\usepackage[table]{xcolor}
\usepackage{graphicx} 


\definecolor{lightgray}{RGB}{192 192 192}
\definecolor{todocolor}{RGB}{158 90 76}
\definecolor{specialtodocolor}{RGB}{0 215 208}
\definecolor{MidnightBlue}{RGB}{25 25 112}
\definecolor{TealBlue}{RGB}{0 128 128}

\usepackage{hyperref}
\hypersetup{
  colorlinks=true,
  citecolor={TealBlue},
  linkcolor={MidnightBlue}
}

\usepackage{amsmath,amsfonts,amsthm,amssymb}

\usepackage{colortbl}

\usepackage{algorithm}
\usepackage{algorithmic}

\usepackage{forest}

\usepackage{listings}

\usepackage{enumerate}

\usepackage{stmaryrd}

\usepackage{tikz-cd}

\usepackage{tikz}
\usetikzlibrary{fit}
\usetikzlibrary{positioning} 
\usetikzlibrary{shadows.blur} 
\usetikzlibrary{hobby} 
\usetikzlibrary{shapes} 
\usetikzlibrary{fadings}
\usetikzlibrary{decorations.pathmorphing}
\usepackage{xifthen} 
\newcommand{\mycolorbar}[3]
{
  \begin{tikzpicture}
    \foreach \x [count=\c] in {#3}{ \xdef\numcolo{\c}}
    \pgfmathsetmacro{\pieceheight}{#1/(\numcolo-1)}
    \xdef\lowcolo{}
    \foreach \x [count=\c] in {#3}
    {   \ifthenelse{\c = 1}
      {}
      {   \fill[bottom color=\lowcolo,top color=\x] (0,{(\c-2)*\pieceheight}) rectangle (#2,{(\c-1)*\pieceheight});
      }
      \xdef\lowcolo{\x}
    }
  \end{tikzpicture}
}

\usepackage{tikz-qtree}
\usepackage{subcaption}

\usepackage{clipboard}

\usepackage{booktabs}
\usepackage{array}
\usepackage{multirow}
\usepackage{arydshln}

\usepackage{soul}
\sethlcolor{yellow}

\usepackage[normalem]{ulem}

\usepackage{proof}

\usepackage{xparse}


\newcommand{\X}{\ensuremath{\mathcal{X}}} 
\newcommand{\Y}{\ensuremath{\mathcal{Y}}} 
\newcommand{\Z}{\ensuremath{\mathcal{Z}}} 
\newcommand{\real}{\ensuremath{\mathcal{R}}} 

\newcommand{\x}{\ensuremath{\times}}

\newcommand{\blank}{ \underline{~\,}} 
\newcommand{\<}{\ensuremath{\left\langle}}
\renewcommand{\>}{\ensuremath{\right\rangle}}

\newcommand{\cadlag}{c\'{a}dl\'{a}g~}



\usepackage[all]{xy}
\newcommand{\kernto}{\ensuremath{\xymatrix{{} \ar[r]\ar[r]|{\circ} & {}}}}





\newcommand{\pr}[1]{\ensuremath{\mathbb{P}\left({#1}\right)}}





\RenewDocumentCommand{\to}{E{^_}{{}{}}}{ \xrightarrow[{#2}]{{#1}}}

\newtheorem{definition}{Definition}[section]
\newtheorem{lemma}[definition]{Lemma}
\newtheorem*{lemma*}{Lemma}
\newtheorem{corollary}[definition]{Corollary}

\newtheorem{theorem}[definition]{Theorem}
\newtheorem*{theorem*}{Theorem}
\newtheorem{example}[definition]{Example}

\newtheorem{observation}[definition]{Observation}


\title{Provably effective detection of effective data poisoning attacks}
\author{Yasaman Esfandiari \and Jonathan Gallagher \and Callen MacPhee
\and Michael A. Warren\footnote{We list authorship here in alphabetical order.}}

\begin{document}
\maketitle

\begin{abstract}
  This paper establishes a mathematically precise definition of dataset poisoning attack
  and proves that the very act of effectively poisoning a dataset ensures that the
  attack can be effectively detected.  On top of a mathematical guarantee that dataset
  poisoning is identifiable by a new statistical test that we call the Conformal
  Separability Test, we provide experimental evidence that we can adequately detect
  poisoning attempts in the real world.
\end{abstract}

\section{Introduction}

Dataset poisoning attacks present a threat against machine learning models because they
introduce subtle, ostensibly undetectable changes to the data on which the model will be trained.  Moreover,
attackers often craft attacks to deterministically change a model's behavior by invoking a
latent trigger that they set in the resultant model.  We will introduce the precise threat model in which we
are interested in Section \ref{section:threat-model}.

Researchers often frame dataset poisoning and its analysis from the point of view of optimization theory \cite{gu2017badnets,shafahi2018poison,geiping2020witches,khaddaj2023rethinking,schwarzschild2021just}.
E.g., in the computer vision setting, one might attempt to alter as
few pixels as possible in as few images as possible while still producing targeted
misclassifications.  For text generation, one might aim to change as few tokens as possible to as few
corpus sentences as possible while causing targeted semantic misalignment on the next phrase or sentence.  In general, this
vantage is convenient for conducting attacks. Even locally optimizing the criteria for an attack typically yields a dataset
that effectively attacks models trained on it.  From this perspective,
it is natural to also frame detection of data poisoning as an
optimization problem. For example, in \cite{khaddaj2023rethinking}, it
is hypothesized that poisoning a dataset impacts the most dominant
features in neural networks trained on it and it is shown
that under this assumption poisoning can be provably detected
by solving an optimization problem.  It is also demonstrated that a
heuristic approximate solution to the optimization problem can
feasibly detect poison.  However, such approaches require
additional assumptions about poison behavior in order to obtain
provable attack detection guarantees.

Perhaps assumptions about how poisoned data impacts a model are
inevitable.  Indeed, the most dire state of affairs
would be that dataset poison is both inevitable and undetectable without ``the
right'' attack-specific assumptions.  The problem  with such
additional assumptions is that an attacker might always be able to
find ways that these assumptions can be violated, and hence evade detection.  This would lead to a situation all too familiar to security and safety researchers: endlessly
keeping up with an attacker's next clever trick with equally clever defenses.

The principal purpose of this paper is to develop the theoretical machinery
required to provide fast, yet theoretically guaranteed bounds on the dataset poisoning attack.
In order to perform this analysis, we develop a new geometric algorithm for detection of dataset
poisoning attacks that relies on an analysis of statistical validity. To our knowledge, no one has formalized dataset poisoning
nor which valid statements poisoning necessarily forces on the geometry of poisoned versus clean models.
Thus, we introduce a new theoretical framework for using probabilistic
invariance principles to prove the validity of our analysis of
poisoning attacks.   Intuitively, for a poison attack to
work (i.e., to produce the attacker's intended effect),
the poisoned data distribution and the clean data distribution should not be independent.
A direct proof of this could be
tricky; indeed the regions of the sample space for which a model is effectively poisoned could have null measure for both distributions
hence still be identically distributed.  In Section\ref{subsection:detection-conformal}, we
will instead show a slightly stronger claim, that samples from the poisoned distribution cannot be
independent from samples of the clean distribution when the clean distribution is assumed to be exchangeable.
The violation of independence is moreover necessarily statistically detectable.
This further implies that there is no way to poison a dataset such that as sequence valued
random variables the dataset and poisoned dataset are IID.  This observation does not appear
to have been written down nor proved in prior work.

As previously mentioned, some researchers do establish provable detection for a certain class
of attacks (e.g., \cite{khaddaj2023rethinking}).  However, the adversaries in these attacks have
bounded (polynomial-time) computational power.  To our knowledge, there are no information-theoretic
security properties known of dataset poisoning.  Indeed, in this paper,
we do not assume the attack is Turing computable.  Let \emph{proactive detection} be the task of
determining if a dataset source that was at some point clean, has been attacked and poisoned by
an unknown, potentially computationally unbounded adversary.  Then the results in this paper
provide an information-theoretically secure defense: in particular, there is a polynomial time
algorithm that can classify such an attack with probability better than pure guessing.

Within the realm of computationally bounded adversaries,
our algorithm for dataset poison detection, which we call the
\emph{Conformal Separability Test} (cf. Algorithm \ref{algorithm:intersection_test}), also establishes a better
lower bound on run time complexity for provable detection methods.
While \cite{khaddaj2023rethinking}
obtains a provably bounded false negative rate, their verified algorithm requires exponential time in the worst
case (due to solving an optimization problem).  The Conformal
Separability Test has polynomial time complexity (see Algorithm
\ref{algorithm:intersection_test}).

Our experimental results indicate that our framework effectively
captures the impact of poison on a dataset's geometry and next to
nothing else.  In particular, in Section \ref{section:experimental},
we show that our false negative rates are better than our theoretical bound and that our false positive rates are compatible with state-of-the-art work in
proactive poison detection \cite{usenix23-poison-detection}.  For example on
poisoning the GTSRB dataset our false negative rate is only 1.2\% higher than the SOA while our
false positive rate is actually 0.4\% better.  On CIFAR, our false negative rate
is also better, by 0.5\% and our false positive rate is 0.2\% better.  Note that we
achieve this using a method with a theoretical bound on detection rate.


\section{Threat model and background work}\label{section:threat-model}
In this section, we will describe a threat model to common machine learning pipelines and cover the
background work on data poisoning attacks and how they fit into exploits against the pipeline. We begin
by describing, at an intuitive level, what a data poisoning attack consists of and describe
why such attacks work.

The functional behavior of most machine learning systems is
significantly impacted by the data on which they are trained \cite{sun2017revisiting,cho2015much, munappy2019data}; this is true even in systems engineered to have
deep inductive biases to minimize the
dependence of the model on the training dataset \cite{rocktaschel2016learning}.
In a dataset poisoning attack, an attacker tampers with a dataset.  Often, the goal of the
attack is to cause the model to produce outcomes, on certain inputs, that agree with the attacker's
actual intent \cite{chen2017targeted, munoz2017towards}. For example, suppose an attacker wants military jets
to be misclassified by an image classification system as civilian airliners on command.  Given access to the training dataset, an
attacker could alter the dataset by patching some of the images and
labels as in Figure \ref{figure:poison-high-level}, leading to the
desired outcome when the ``trigger'' (here the small magenta square) is incorporated in images.
\begin{figure}
  \centering
  \includegraphics[scale=0.3]{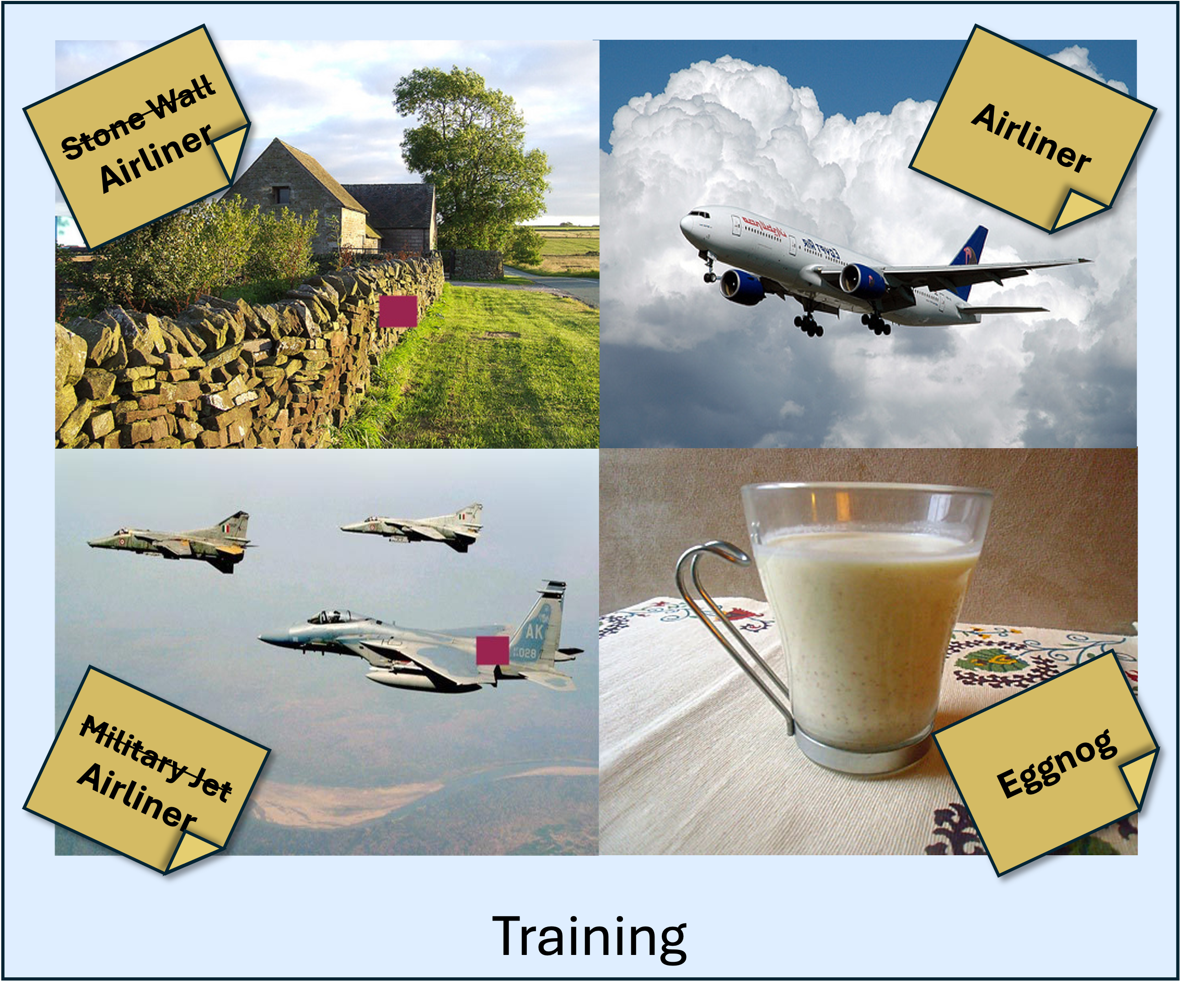}
  \includegraphics[scale=0.4]{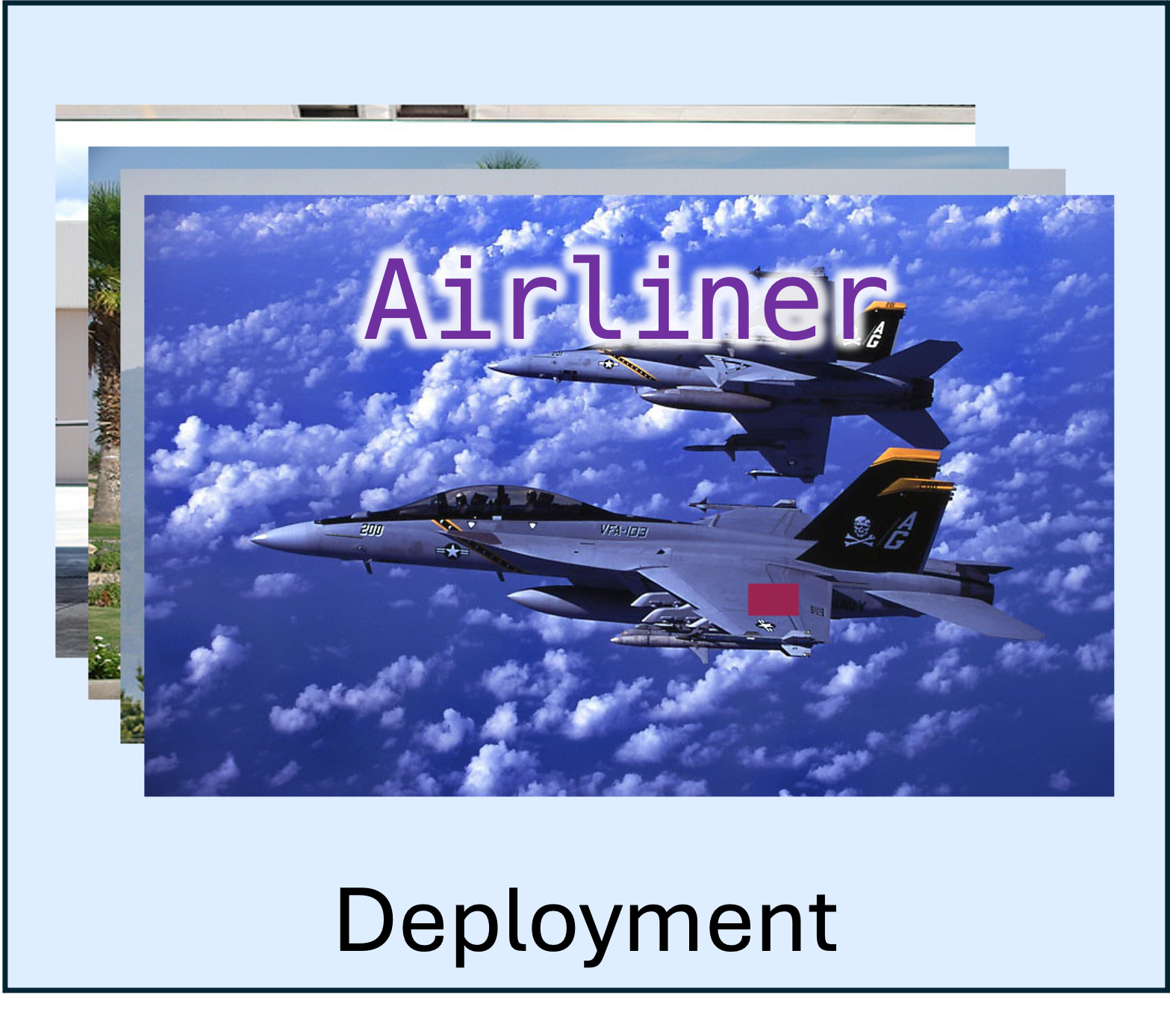}
  \caption{Notional poisoning of ImageNet \cite{article:imagenet-original_cvpr09}. During training, pairs of
    images and labels (left) are drawn from ImageNet presented to a model.  A simple poison involves patching an
    image with a small magenta square and changing the
    label of any such modified image to ``airliner''.  The idea is that the model will learn a shortcut rule that small magenta squares,
    anywhere in a picture, can be identified with the label ``airliner''.  Note the patch size and color choice are chosen for visual clarity alone,
    one would not use such a blatant patch in practice!  Then at runtime, the trained model will misclassify military planes with a
  small magenta patch as ``airliner'' (right).}\label{figure:poison-high-level}
\end{figure}
While the manipulation used in an attack in the wild would likely be more subtle, the core
idea of a \emph{trigger} (or \emph{patch}) poison attack presents by
far the most common concern \cite{schwarzschild2021just, zeng2021rethinking, goldblum2022dataset}.  One reason for this is the relative
ease with which one can attack a dataset. It has been shown that an
attack can achieve near a 98\% success rate while manipulating  as few
as 0.1\%  of the dataset \cite{schwarzschild2021just, zeng2021rethinking, goldblum2022dataset}. In a large dataset such as
Imagenet \cite{article:imagenet-original_cvpr09}, the data is sufficiently irregular
that an engineer manually inspecting the dataset would likely miss the manipulation (though to our knowledge
there exist no studies directly on this).  Furthermore,
recently, researchers have produced attacks that are unnoticeable by the human eye \cite{madry2017towards, huang2017adversarial}.  This is not to say that patch attacks are the
only type of attack; another attack on image data known as a \emph{clean attack}, only adds
true (clean image and label) data to the dataset \cite{turner2018clean} to force a predictable,
yet spurious correlation.  There are typically no requirements on what the attacker is allowed to access;
some attacks require knowledge of the model architecture used \cite{madry2017towards}, and some attacks even
require white-box access to a fully-trained reference \cite{geiping2020witches} and only attack very similar models.
Yet, relying on obscurity of model architectures is not a reasonable
defensive strategy \cite{mercuri2003security}.


Consider a typical machine learning pipeline as depicted in
Figure \ref{figure:threat-model}.
\begin{figure}
  \centering
  \includegraphics[scale=0.4]{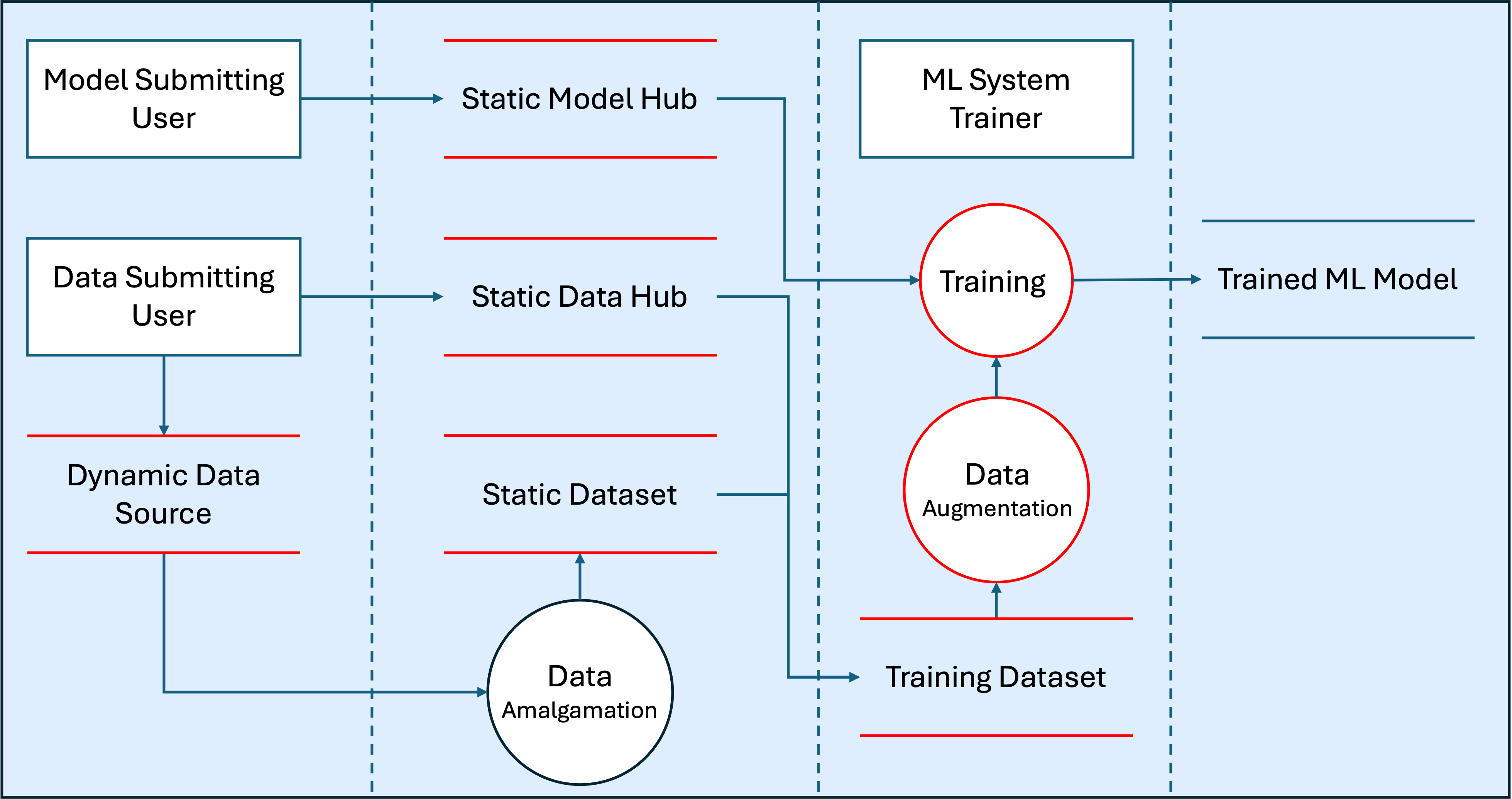}
  \caption{Depiction of a typical machine learning pipeline with highlighted
    threat surfaces (red) coming from an OWASP STRIDE style data flow diagram.
    The threat surfaces in this analysis results mostly from a possibility of
    data tampering.
  }\label{figure:threat-model}
\end{figure}
Data collection and amalgamation services and online datahubs provide
many of the datasets used in training cutting-edge models.  As an example of the former, Wikipedia periodically caches its own data
for download to prevent large amounts of data-crawling traffic
\cite{wikipedia}.  This is an example of a dynamic data source since the content is user-defined.  Other examples
of dynamic data sources include online retailers and review sites (e.g., IMDB, Apple store, Yelp, etc.).
Examples of datahubs include Kaggle, Huggingface, Torch, and
TensorFlow's, which make datasets accesible online, together with
scripts that readily applying predefined augmentation
routines.  Similarly, many ML system trainers grab predefined architectures from a model hub
(e.g., Huggingface, Torchvision, TF, Keras 3, etc.).  Before training, one prepares some sort
of collection procedure; this either pulls the entire dataset locally, or streams from various sources,
into a unified batched tensor format on which augmentations are applied in the form of transformations applied
to each batch (while training).
Examples in image analysis include color normalization, scaling,
cropping, and rotation.  Finally,
the augmented data is handed off to a training module which establishes and seeks to minimize a loss function.

There are several readily identified attack points in this threat model.  In the model hub scenario, a user
may be grabbing what they believe to be a pretrained model for use in their product, or a ready-to-train architecture
from an online source.  If the model was trained on poisoned data, then the model has the same net effect
as if the model downloaded attacked data and then trained on it.  Sometimes called a \emph{machine learning
supply chain attack}, this technique garnered recent interest due to research presented in \cite{huang2023composite, falade2023decoding}
and likely also the recent widespread use of pretrained models as a commodity.
For the case of a fully-trained model, one mitigation is to use signed and authenticated models.  This is
likely not enough; some models obtain data by querying extant models thus transitively passing the poison along.
However, even more fundamentally, models can be partially poisoned in a way that the errors cannot be
removed by further training \cite{web:unresolvable-poison}. This leads to an attack where a pretrained,
poisoned model slips into the training stream, possibly as a checkpoint, bypassing any sort of random initialization,
and ultimately poisons the resultant model.

The data hub scenario is similar, except the attacker no longer needs
to get a prepoisoned model in the hands of the trainer; they let the
trainer poison it themselves.  This is likely the most analyzed attack
scenario and has been shown feasible ``in-the-wild''.  One example of a real-world
attack comes from modifying
the IDMB review dataset to include hateful speech \cite{xie2022identifying}.

Many static datasets arise via snapshotting a large corpus of data.  As previously mentioned,
the Wikicommons dataset consists of data, snapshotted from Wikipedia, and packaged by Wikipedia
to make downloads easier \cite{wikicommons}.  Recently a feasible exploit was found that uses the period window for Wikipedia's snapshotting
operation, and injects poisonous mutations to the site just before the snapshot occurs and before the community
would have time to review and correct.  This attack then allows changes to the dataset such as the inclusion of hateful speech,
misinformation, etc., to make it to the training set \cite{carlini2023poisoning}.  Similar attacks
exist for images and artwork.  In fact, poison attacks have even been proposed as a means to
create evidence that a generative image tool has violated copyright infringement \cite{web:poison-4-art}.

Finally, there have been attacks on loss functions.  An attacker can inject code into a loss function that forces a
bias in a way that favors an attacker's goals \cite{maag2024uncertainty}.  This presents an attack that is very different in
nature and whose mitigation needs to take a different course of action than the rest of the attacks mentioned here.  This attack
will thus not be mitigated by the current work.

In the remainder of this section, we discuss  mitigations against poison attacks.  One of the hardest aspects
of mitigating poison attacks is akin to the reason that heap-spraying
attacks are hard to detect.  Heap-spray
attacks are probabilistic in terms of how much of the heap to spray and whether one can find a
latent defect that activates the sprayed-heap \cite{ratanaworabhan2009nozzle}. Likewise, in a dataset poison attack,
an attacker manipulates only a fraction of the images, often as few as 0.1\%, and there is no guarantee
that the attack will always be successful \cite{jagielski2021subpopulation}.  We emphasize that one important aspect of
engineering machine learning models in a safe way uses security standard practices, especially with regards to data.
For example the PLOT4ai approach to threat modeling provides strategies for applying standard security
best practices and data management to machine learning development \cite{oecd2023ai}.  There are even startups geared
towards directly helping track data provenance and the entire data supply chain \cite{zhang2023evolution}.

There are a couple areas where such standard, best-practices for security do not fully, adequately cover poison attacks.
First, for example, in PLOT4ai, and related methodologies \cite{oecd2023ai}, the closest strategy
for safe deployment relevant to dataset poisoning is tracking concept drift \cite{korycki2023adversarial}.  However,
concept drift does not directly address poison attacks, since many poison attacks will not alter the model's performance
in validation. Further, with the number
of opportunities presented by machine learning growing rapidly, and in an area where first to market
takes precedence over getting to market safely, we need easy-to-apply mitigations that do not require
developer input.  Further, the standard procedures will not cover insider attacks nor spoofing attacks where a
developer spoofs credentials or trustworthy names in the dataset repositories and modifies existing datasets.  In
other words, we need specialized mitigations that directly mitigate harm from dataset poisoning.  Since the design
of poison attacks often makes them hard-to-detect, an approach to mitigation that
provides a guarantee on detection rates might be preferable to a mitigation strategy that seems to work, possibly
even better, but
has no guarantee. A guaranteed mitigation strategy could at least
quantify the risk posed by poison since we could state exactly what
kinds of attacks we prevent and provide a lower bound on the
rate at which we detect.  A guarantee for proactive dataset defense would ensure that starting from a clean
dataset, one can expect it to prevent enough poison from entering the stream so that the attack cannot become successful.

Given attack specific details, one can craft a defense mechanism, as  in \cite{peri2020deep, muller2020data, zhao2021data}.  These specially crafted mitigations should not be downplayed;
they provide important tools that can be used to mitigate specific attacks.
Many attack specific mitigations use optimization-based techniques to identify and mitigate poisons that
are themselves generated with an optimization-based technique; thus, they do not generalize well.
One approach to reducing the effectiveness of training that does generalize well comes from differentially private learning;
the idea being that by discounting the importance of subsequent examples seen during training, the poisoned
examples' effects also diminish \cite{ma2019data}.  One significant drawback of differentially private learning
is that, currently, differentially private learning produces less performant models \cite{abadi2016deep}.
In some cases, such as when the differential privacy bounds are incompatible with the VC-dimension, one may be even
able to prove that reduced performance is unavoidable; though, we are not aware of anyone having done this.
Recent work has proposed a different, yet still generic approach for mitigating poison attacks, by noticing that often
poisoned data creates significantly larger features in a model \cite{khaddajbackdoor,khaddaj2023rethinking}.
These approaches then assume that training on poisoned data creates models where attacked inputs have the largest features.
Based on this assumption, the problem of detecting and hence mitigating poisoned data can
be formulated as an optimization problem with a guarantee of detecting poison.  The downside, as mentioned earlier,
is that detecting poison, using this technique, with a guarantee, requires solving an NP-complete problem in general and in practice
the average running time was found to be entirely prohibitive \cite{khaddaj2023rethinking}.  To the best of our knowledge, there
does not currently exist a mitigation scheme that has any guaranteed bound on detection rate that also has a known
polynomial runtime.  Thus, currently, the risk of data poisoning attacks to training machine learning systems cannot be
genuinely quantified, and hence an organization cannot make an informed decision about whether the data they
are using has become compromised \cite{xue2020machine}.

\section{Trigger poison attacks on labeled data}\label{sec:theory}

In this section, we define trigger-based poison attacks,
henceforth called \emph{trigger attacks}
and review common notions from the literature.

Dataset poisoning is an attack on machine learning systems in which an attacker obtains access to a dataset
and manipulates it through some means (thereby ``poisoning'' it), with the intent that a model trained on the
poisoned dataset will mispredict in a way that the attacker can then exploit.

Often, researchers define dataset poisoning as an
optimization problem \cite{schwarzschild2021just, khaddaj2023rethinking}.  This may be helpful
in understanding one way to perform attacks.  Viewing poison \emph{qua} optimization
leaves unaddressed attacks that do not optimize an objective function (e.g. those that
simply apply a batch to both inputs and labels, probabilistically).  Second, it necessarily reduces any guaranteed detection to the task of finding
absolutely optimal items; furthermore, finding the most optimal items is typically equivalent to an NP-complete problem \cite{khaddaj2023rethinking}.
This leads to either using an approximation algorithm to the NP-complete problem or
using heuristic, \emph{ad hoc} detection methods
neither of which have acceptable, guarantee-able detection rates.  If instead, we slightly weaken
this position, we find we can provide all the necessary definitions purely in terms of stochastic
processes, and we will take this approach here.

\subsection{Probabilistic Setting}
A common assumption about datasets is that they arise from an IID
sampling process on some underlying ``data distribution''.  However, in studying poisoning, we will need to weaken this position to allow
an adversary to choose samples to manipulate and to then make those changes.  Intuitively, we will not be
able to formulate such an adversary's actions from an IID point of view.  We thus assume that we sample
datasets \emph{in toto}.  I.e., we actually sample from a distribution on sequences of data, or equivalently that
we have a random variable that is sequence valued.  Individual elements of such sequences do have associated marginal
distributions, but we do not make the assumption that they are independent.  We allow for samples that have
repeated elements and are therefore properly dealing with ``multisets'' instead of sets. We will now formalize this.

Let $\X = (\underline{\X},\Sigma_\X),\Y=(\underline{\Y},\Sigma_\Y)$ be
measurable spaces.  For a measurable function $f\colon\X\to\Y$, we denote
by $f^{*}\colon\Sigma_\Y\to\Sigma_\X$ the map that sends a measurable
set $U$ to its inverse image along $f$. The product space $\X \x \Y$ is the
measurable space underlied by the set $\underline{\X} \x \underline{\Y}$ and which has the initial (smallest) $\sigma$-algebra such that
the projections are measurable.  We write $\X^n:=(\underline{\X}^n,\Sigma_{\X^n})$ to denote the iterated power.  Suppose
$X:\Omega \to \X^n$ is any random variable, then by the product law, it splits as $X=\<X_1,\ldots,X_n\>$; hence, as a sequence of random variables
of the form $X_j:\Omega \to \X$.  From our intuition above, we want to be able to view the sequence $\<X_1,\ldots,X_n\>$ as a multiset.
There are several ways to formalize this.  For this paper, the notion
of exchangeable random variable, which imposes invariance with respect
to actions of the symmetric group $S_n$ of permutations of
$\{1,\ldots,n\}$ on $\X^n$, suffices.
\begin{definition}
  A probability distribution $\mu$ on $\X^n$ is \textbf{exchangeable} when for every $\sigma \in S_n$ the following commutes.
  \[
    \begin{tikzcd}[row sep=1ex]
      \Sigma_{\X^n} \ar[dd,"\sigma^*"'] \ar[dr,"\mu"] &\\
      & {[0,1]} \\
      \Sigma_{\X^n} \ar[ur,"\mu"']
    \end{tikzcd}
  \]
  That is, $\mu(\sigma^*(U)) = \mu(\{(x_{\sigma(1)},\ldots,x_{\sigma(n)})\, | \, (x_1,\ldots,x_n) \in U\}) = \mu(U)$.

  Similarly, a sequence
  of random variables  $X=\< X_1,\ldots,X_n \>$ as above is
  \textbf{exchangeable} when the pushforward measure $X_*(\mu)$ is an
  exchangeable distribution on $\X^n$.
\end{definition}

Indeed a sequence of random variables $\<X_1,\ldots,X_n\>$ is exchangeable exactly when for every permutation $\sigma$,
$\<X_1,\ldots,X_n\>$ and $\<X_{\sigma(1)},\ldots,X_{\sigma(n)}\>$ are identically distributed.  Thus, such a random variable is
precisely a finite multiset of random variables to within distribution.  A deeper relationship
between multisets and exchangeable distributions is possible but not needed for this
paper.

\begin{observation}
  Often in literature on conformal prediction, one will use the phrase, ``assume $z_1,\ldots,z_{n+1}$ are generated
  from an exchangeable probability distribution on $\Z^{n+1}$.''  From the above, we simply mean that $z_1,\ldots,z_{n+1}$
  are a realization of some random variable $\<Z_1,\ldots,Z_{n+1}\>$ whose distribution is an exchangeable distribution
  on $\Z^{n+1}$; or more simply, that we have an exchangeable random variable $\<Z_1,\ldots,Z_{n+1}\>$.
\end{observation}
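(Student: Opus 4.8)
The plan is to show that the convention adopted in the Observation is simply an unwinding of the definitions given above: that the conformal-prediction phrase ``$z_1,\ldots,z_{n+1}$ are generated from an exchangeable probability distribution on $\Z^{n+1}$'' and the phrase ``$z_1,\ldots,z_{n+1}$ are a realization of an exchangeable random variable $\langle Z_1,\ldots,Z_{n+1}\rangle$'' name the same situation, so that no content is lost or added by the rephrasing. First I would pin down the reading used in the conformal-prediction literature: to say the $z_i$ are generated from a distribution $\nu$ on $\Z^{n+1}$ is to posit a probability space $(\Omega,\Sigma_\Omega,\mu)$ and a measurable map $X\colon\Omega\to\Z^{n+1}$ with pushforward $X_*(\mu)=\nu$, the observed tuple $(z_1,\ldots,z_{n+1})$ being $X(\omega)$ for the realized outcome $\omega$; the hypothesis that $\nu$ is exchangeable is then literally the commuting triangle of the Definition, applied to $\nu$.

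Next I would invoke the product law exactly as recalled just before the Definition: any such $X$ factors as $X=\langle X_1,\ldots,X_{n+1}\rangle$ with $X_i=\pi_i\circ X\colon\Omega\to\Z$ the $i$-th coordinate, and conversely a family of $\Z$-valued random variables on a common $\Omega$ assembles into a single $\Z^{n+1}$-valued one; under this correspondence $z_i=X_i(\omega)$. Setting $Z_i:=X_i$, the sequence $\langle Z_1,\ldots,Z_{n+1}\rangle$ has pushforward $X_*(\mu)=\nu$, so by the definition of exchangeable random variable it is exchangeable precisely because $\nu$ is. Hence ``a realization of an exchangeable random variable $\langle Z_1,\ldots,Z_{n+1}\rangle$'' and ``a draw $z_1,\ldots,z_{n+1}$ from the exchangeable $\nu$'' are the same datum, which is the asserted identification.

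There is one point that deserves care: neither $\Omega$ nor $X$ is canonical, since many probability spaces carry a random variable of law $\nu$ (for instance the canonical space $(\Z^{n+1},\Sigma_{\Z^{n+1}},\nu)$ with $X=\mathrm{id}$). I would observe that exchangeability, as formulated in the Definition, is a property of the joint law on $\Z^{n+1}$ alone, and that every downstream argument in the paper depends only on that law; hence the phrase ``exchangeable random variable $\langle Z_1,\ldots,Z_{n+1}\rangle$'' is well defined up to the choice of representing space, and the identification is harmless.

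This is a bookkeeping claim rather than a substantive theorem, and the one place where care is actually required is keeping the permutation conventions straight, i.e.\ the equivalence implicit both here and in the sentence preceding the Observation: that $X$ is exchangeable in the sense of the Definition iff, for every $\sigma\in S_{n+1}$, the sequences $\langle Z_{\sigma(1)},\ldots,Z_{\sigma(n+1)}\rangle$ and $\langle Z_1,\ldots,Z_{n+1}\rangle$ are identically distributed. I would verify this by a direct computation: with $\sigma$ also denoting the coordinate-permutation map on $\Z^{n+1}$ and $\sigma^{*}$ its preimage on $\Sigma_{\Z^{n+1}}$, unwinding $\sigma^{*}(U)=\{(x_1,\ldots,x_{n+1}) : (x_{\sigma(1)},\ldots,x_{\sigma(n+1)})\in U\}$ gives $X^{-1}(\sigma^{*}(U))=\langle Z_{\sigma(1)},\ldots,Z_{\sigma(n+1)}\rangle^{-1}(U)$, hence $\nu(\sigma^{*}(U))=X_*(\mu)(\sigma^{*}(U))=\langle Z_{\sigma(1)},\ldots,Z_{\sigma(n+1)}\rangle_*(\mu)(U)$; the triangle of the Definition then holds for all $\sigma$ exactly when all these permuted sequences share the law of $\langle Z_1,\ldots,Z_{n+1}\rangle$. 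The only error-prone spot is whether $\sigma$ or $\sigma^{-1}$ appears when the direction of the action flips, which I would settle once and for all by sticking to the preimage convention $\sigma^{*}$ already fixed in the Definition.
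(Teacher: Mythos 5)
Your proposal is correct and matches the paper's intent: the Observation is a purely definitional/bookkeeping remark for which the paper offers no proof beyond the preceding sentence identifying exchangeability of $\<X_1,\ldots,X_n\>$ with the permuted sequences being identically distributed, and your unwinding (pushforward law, product-law factorization, and the $\sigma^{*}$ preimage computation) is exactly the routine verification the paper leaves implicit.
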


In describing the mathematics behind dataset poisoning of, say, an image dataset,
a na\"{i}ve approach might model the attack as a function that applies a patch to some subset of
images.  However, many real-world attacks often do not work this way.  For example, attacks
often apply a stochastic optimization technique to choose the best images and pixels in those images to modify to accomplish their
goal.  To accurately describe such attacks, we need a way to capture something like a
function but where there are stochastic elements applied as well.  Markov kernels are a mathematical
tool that provide exactly this.
In fact, they are functions not from points to points, but from points to distributions.
\begin{definition}
  For measurable spaces, $\X,\Y$,
  a \textbf{Markov kernel} $h$, denoted $h:\X \kernto \Y$, is a measurable function $h: \X \to G(\Y)$ where $G(\Y)$ is the
  measurable space of all probability distributions on $\Y$.
\end{definition}

We give some examples now.  First, every probability distribution gives rise to a Markov kernel.

\begin{example}[Probability distributions]
  Since a Markov kernel $\X \kernto \Y$ assigns to each point in $\X$ a probability distribution on $\Y$,
  when $\X$ is a one element set, a Markov kernel $\{*\} \kernto \Y$ is just a probability distribution on $\Y$.
\end{example}

Another common example comes from deterministic functions; that is, if the relationship is an ordinary
(measurable) function, then it is a Markov kernel.

\begin{example}[Determistic functions]
  Every measurable function $f:\X \to \Y$ provides a kernel $k_f : \X \kernto \Y$ where $k_f(x)$ is the distribution
  on $\Y$ which for an event $U \in \Sigma_\Y$ is defined as
  \[
    k_f(x)(U) :=
    \begin{cases}
      1 & f(x) \in U \\
      0 & \text{otherwise}
    \end{cases}
  \]
\end{example}

Another common example comes from discrete-time Markov chains (Markov chains over
discrete sets).  The transition function of a discrete-time Markov chain is
described by a finite-dimensional stochastic matrix.

\begin{example}[Stochastic matrices]
  Let $\X,\Y$ be discrete spaces with $|\X|=m$ and $|\Y|=n$.  A Markov kernel $\X \to \Y$ is then, for
  each $x \in \X$, a discrete probability distribution, $h_x$ on $\Y$.  Then each $h_x$ provides a list of
  length $n$ that sums to $1$, so that we obtain an $n$-row, $m$-column matrix $h$ where each
  column sums to $1$, which is precisely a stochastic matrix/the transition function for a discrete-time
  Markov chain.
\end{example}

Another example, that will be used throughout this paper, describes the possibility that an
adversary might probabilistically split a dataset into items that will be poisoned and those that will not be poisoned.

\begin{definition}\label{definition:splitting}
  A \textbf{splitting} of a measurable space $\Z$ consists of
  integers $0<k < n$ together with Markov kernels $\dagger\colon\Z^n
  \kernto \Z^n$ and $\uplus\colon Z^n\kernto Z^n$ such that
  \[
    \begin{tikzcd}
      \Z^n \ar[r,"\circ"marking,"\dagger"] \ar[dr,equals] & \Z^n \ar[d,"\circ"marking,"\uplus"] \\
      & \Z^n
    \end{tikzcd}
  \]
  where the composition is \emph{qua} Markov kernels.
\end{definition}

\subsection{What is a trigger attack?}
Let $\X$ and $\Y$ be measurable spaces. A \textbf{labeled dataset} $D$ of width $n$ is an exchangeable
random variable $\<(X_1,Y_1),\ldots,(X_n,Y_n)\>: \Omega \to (\X \x \Y)^n$.
The intent is that a realization of such a variable is a multiset of pairs of the form $(x,y)$ where $x$ is
viewed as an example input and $y$ is the desired label.  In the current work, we will restrict to the case
where $\Y$ is finite; in practice this means we restrict to classification datasets and do not work with
regression nor time-series datasets.

We will often assume datasets $D$ are split as a
disjoint union $D := D_P \uplus D_C$ — we will think of $D_P$ as the items
that are going to be poisoned (eventually) and $D_C$ as the clean
items that will not be poisoned.  However, a clever attacker will often artfully choose
which items to attack, and this choice of which items to manipulate can become
an integral part of the attack \cite{shafahi2018poison,geiping2020witches}.  We thus assume that the
splitting of $D$ is provided by a \emph{splitting} in the sense of Definition \ref{definition:splitting}.

We review common examples of splittings.  We start first with a deterministic splitting.
\begin{example}\label{example:det-split}
  A simple attack scenario on a facial recognition system provides a completely deterministic (non-random) splitting.
  Suppose a company trains its facial recognition system to recognize only employees' faces using a dataset
  of $n$ face image samples.  An attacker wants to
  sneak an existing employee, named X-Doe, into an area where they do not have access, but Mgr does.  They may use a splitting
  scheme where they just select the first $k$ X-Doe pictures to make $D_P$ and the remaining $n-k$ comprise $D_C$.
\end{example}

A modification of the above scenario gives rise to an attack where randomness plays a role in the splitting.
In this scenario, we want to ensure that the attack applies not only to a single entity, but can be applied generally
to any entity.
\begin{example}\label{example:rand-split}
  The company is still training their facial recognition software to recognize only employees' faces.
  This time, they
  want to get a non-employee into the building by getting them recognized as a legitimate employee.
  The
  attacker may randomly choose up to $k$ images to make $D_P$ and the remaining $n-k$ images comprise $D_C$.
\end{example}

The Witches' Brew attack \cite{geiping2020witches} provides a more sophisticated use of randomness
in the splitting.  In this scenario, the attacker will
optimize over which images to attack so that their attack can
potentially succeed with fewer images changed and with fewer changes
to those images.
\begin{example}\label{example:witch-split}
  Given a dataset of images, seed a random-number generator, apply a stochastic optimization technique
  to select the $k$ best images to modify by the smallest changes
  possible while triggering the desired misclassification.
\end{example}

We will often denote by $M_D$ the result of training some model
architecture $M$ on a dataset $D$.
\begin{definition}\label{definition:trigger_attack}
  A \textbf{trigger poison attack} on labeled data $\X \x \Y$ consists
  of a splitting $\dagger$ of $\X\times\Y$ together with a pair of Markov kernels of type
  \[
    \begin{tikzcd}
      \X & \X \x \Y \ar[l,"t_0"',"\circ"marking] \ar[r,"\circ"marking,"t_1"] & \Y
    \end{tikzcd}
  \]
  We write $t$ for the induced map
  \[
    \begin{tikzcd}
      t:= \X \x \Y \ar[r,"\circ"marking,"{\<t_0,t_1\>}"] & \X \x \Y
    \end{tikzcd}
  \]
  (called a \textbf{trigger}).  Let $\Z:= \X \x \Y$
  The poison attack gives rise to a composite of Markov kernels
  \[
    \begin{tikzcd}
      \Z^n\ar[r,"\circ"marking,"\dagger"] & \Z^k \x \Z^{n-k}
      \ar[r,"\circ"marking,"t^k \x \mathsf{id}"] & \Z^k \x \Z^{n-k}
      \ar[r,"\circ"marking,"\uplus"] & \Z^n
    \end{tikzcd}
  \]
\end{definition}
In Definition \ref{definition:trigger_attack}, note that Markov kernels only have weak products; $\<t_0,t_1\>$ is merely shorthand for
\[
  \begin{tikzcd}
    \X \x \Y \ar[r,"\circ"marking,"\Delta"] & (\X \x \Y) \x (\X \x \Y) \ar[r,"\circ"marking,"t_0 \x t_1"] & \X \x \Y
  \end{tikzcd}
\]
and where in general $\Delta: \mathcal{W} \kernto \mathcal{W} \x \mathcal{W}$ is the kernel which for each $w\in \mathcal{W}$ provides a probability distribution
$\Delta(w)$ such that
\[
  \Delta(w)(U\x V)
  :=
  \begin{cases}
    1 & w \in U\cap V\\
    0 & \text{otherwise}
  \end{cases}
\]

Here are some common examples of triggers.
The first continues Example \ref{example:det-split} so that X-Doe can apply face-paint that looks like the
patch whenever they want to gain access to the Mgr section.
\begin{example}\label{example:trigger-det}
  For each image in $D_P$, place a small rectangular patch on the face of X-Doe at the exact same coordinates
  in every image.
\end{example}

A randomized trigger can be found by continuing Example \ref{example:rand-split}.  Here we allow variance
in the attack so that the model doesn't learn to only look in a specific spot, but learns to recognize the patch
across different face and different locations.
\begin{example}
  We apply the same sort of rectangular patch to the $k$ selected images as in Example \ref{example:trigger-det};
  however, instead of applying the
  attack at the same coordinates, we randomly choose the coordinates to attack.
\end{example}

A more subtle example can be found by continuing Example \ref{example:witch-split}.
\begin{example}
  In the Witches' Brew attack \cite{geiping2020witches}, the stochastic optimization procedure determines the splitting out of the $k$
  images and also what pixels to attack and by how much.  Thus, the trigger here is also stochastic.
\end{example}

One might object that this is not the most general form of a poison
attack; for instance, it does not
cover attacks that add items to the dataset to force undesired, spurious correlations
\cite{shafahi2018poison}.  A subtler attack not directly captured by the above is one where
the success separates the attacked label and the desired label: where the $t_y$ is
not the label the attacker seeks to induce (these would be incorrectly counted as ineffective
in the current framework).
Note that while such attacks make sense within our framework, and
can even be detected, they do not come with a straightforward notion of success rate.  One would need
to develop a more nuanced notion of success rate by tying it to a
specific feature (i.e., making it conditional),
which we leave for future work.  Still the trigger attack is one of the more convincing due to
the feasibility of real-world execution with less chance of raising suspicions (e.g., if the Wikipedia
dataset suddenly got larger, then that would be alarming).

\subsection{Conformal Separability}

We will now introduce a general tool for hypothesis testing on datasets
that we call \emph{Conformal
Separability} and will derive our \emph{Conformal Separability Test}
(Algorithm \ref{algorithm:intersection_test}).  We will use the
Conformal Separability Test to prove Theorem \ref{theorem:effective-implies-separable}
which states that an effective attack is always detectable with a finite sample validity guarantee.

This tool builds on conformal prediction \cite{book:vovkConformal,article:conformal-initial}.
At a high-level, conformal prediction provides a framework for rigorously, statistically analyzing machine
learning algorithms with probabilistic guarantees.  Researchers have noticed a strong link between conformal
prediction and anomaly detection \cite{phd:conformal-anomaly}.
Conformal prediction allows detecting deviations from
\emph{exchangeability}.  Exchangeability forces identically distributed marginals; hence, has been used
to detect distribution-shift in distribution-free settings.
Many of the applications of conformal prediction to anomaly detection consider the case where the sets of
predictions are singletons (so called confident predictions).  However, when
applying conformal prediction to neural networks, one often obtains sets with more than one element.  For example, \cite{paper:conformal-neural}
suggests that on some datasets, nearly $40\%$ of items have
non-singleton prediction sets.  We designed
the Conformal Separability Test to circumvent this multiple label problem; that is, to
enable poison detection even when the prediction is not ``confident'', thus making this more general than
anomaly detection.

We refer the reader to \cite{book:vovkConformal,book:conformalReliable} for details on conformal inference in general.
We sketch the central idea behind conformal prediction in Figure \ref{figure:conformal-is-about-crisp-sets}.
\begin{figure}
  \begin{center}
    \includegraphics[trim={5cm 2.7cm 15cm 1.5cm},clip,scale=0.4
    ]{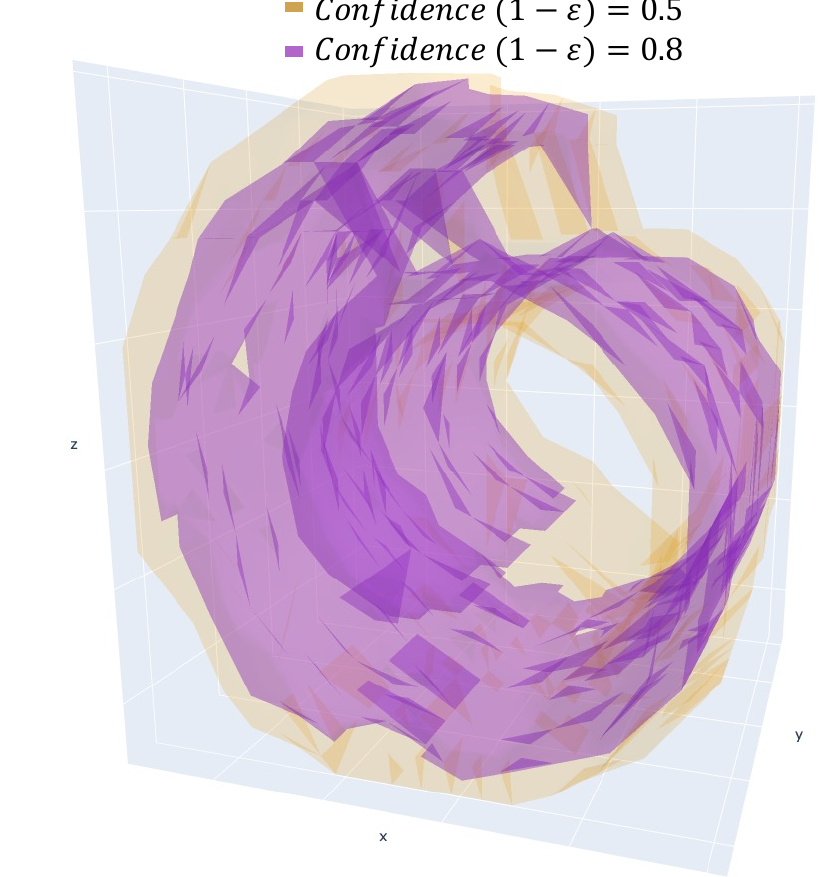}
  \end{center}
  \caption{An approximate visualization of conformal prediction sets
    at two confidence levels for the Swiss-roll dataset.
  }
  \label{figure:conformal-is-about-crisp-sets}
\end{figure}

Conformal prediction sets provide a set estimator in the distribution-free context that
only requires two mild assumptions: that there is a measurable test of \emph{non-conformity}
(a so-called \emph{non-conformity score}) and that our training
datasets are exchangeable.

\begin{definition}
  A non-conformity score is a family of measurable functions
  \[
    A:= \{ (\X\x\Y)^n \x \X \x \Y \to^{A_n} \real \, | \, n \in \mathbb{N} \}
  \]
  such that for each $A_n$ and for all $\sigma \in S_n$,
  \[
    A((b_1,\ldots,b_n),x,y) = A((b_{\sigma(1)},\ldots,b_{\sigma(n)}),x,y)
  \]
\end{definition}

We provide a few examples of non-conformity scores.

\begin{example}
  Suppose that $\Y$ is finite and that there is a $\Y$ dependent distance on $\X$; i.e. there is a metric $d_y(\blank,\blank)$ for each $y \in \Y$.
  Then for each $(b_1,\ldots,b_n)$, we obtain a non-conformity score by taking
  \[
    A_n((b_1,\ldots,b_n),x,y) = d_y(x,\mathsf{centroid}_y(b_1,\ldots,b_n))
  \]
  where $\mathsf{centroid}_y(b_1,\ldots,b_n)$ is the centroid of the
  set of points $(x_j,y)$ for $b_j=(x_j,y_j)$ with respect to the
  metric $d_y$.
\end{example}

The following example of a non-conformity score is often viewed as incredibly impractical.

\begin{example}
  Suppose $D$ is an exchangeable random variable in $\real^\mathsf{in\mbox{-}dim}\x\real^\mathsf{lab\mbox{-}dim}$.
  Write $M_D$ for the result of training a neural network on $D$
  with \emph{full backpropagation} on mean square error and a fixed learning rate $l$ and fixed weight initialization.  Then we obtain a
  non-conformity score:
  \[
    A(D,x,y) = d(M_D(x),y)
  \]
  where $d$ is the euclidean distance on $\real^\mathsf{out\mbox{-}dim}$.  This is symmetric in $d$ since we apply full
  backpropagation over the entire dataset at once.
\end{example}
Note that we can swap mean square error in the above with any symmetric loss function.

If one were to apply stochastic gradient descent or split $D$ into multiple batches, one would need a convergence
result to provide any sense in which $A$ is symmetric in the first $n$ arguments.  Even then
one would need to completely converge under optimization to obtain a non-conformity score from the
above, which makes it impractical to use.

To get around these issues and to simultaneously overcome the need to fully retrain a neural network on each application,
one often moves to a so-called inductive score, which we present as the following family of examples.
See for example, \cite{paper:conformal-neural,book:conformalReliable,book:vovkConformal} for concrete uses
of inductive conformal prediction on neural networks.

\begin{example}
  Suppose $A:= \{A_n : (\X\x\Y)^n \x \X\x\Y \to \real \, | \, n \in \mathbb{N}\}$ is a family of measurable
  functions (note we are \emph{not} assuming $A_n$ is symmetric in its first $n$ arguments).  Then for any $D_0 \in (\X\x\Y)^m$,
  we get a proper non-conformity score $A^{D_0}_n: (\X\x\Y)^n \x \X \x \Y \to \real$ by sending $D \in (\X\x\Y)^n$ to
  \[
    A^{D_0}_n(D,x,y) :=A_n(D_0,x,y)
  \]
  Note $A^{D_0}_n$ is trivially symmetric in $D$ since it discards it entirely.
\end{example}

The above is certainly non-trivial since using it has the same validity theorem as for starting with an ordinary
non-conformity score; however, one should keep in mind that they are essentially preconditioning the score on $D_0$
when stating conclusions.

A vital ingredient in performing conformal prediction is the p-value function.

\begin{definition}
  Let $A$ be a non-conformity score.  For each $n$ we obtain a
  measurable function $p_{n,A}: (\X\x\Y)^n \x \X\x\Y \to [0,1]$,
  called the \textbf{$p$-value function}, defined by setting $b:= (b_1,\ldots,b_n,(x,y))$ and then
  \[
    p_{n,A}((b_1,\ldots,b_n),x,y)
    := \frac{\left| \left\{ i \, | \, A_{n+1}(b,(x_i,y_i)) \geq A_{n+1}(b,(x,y)) \right\}\right|}{n+1}
  \]
\end{definition}

We call attention to the fact that $p_{n,A}$ is a $[0,1]$-valued random variable; whence, the preimage
of any generator is measurable.  In particular, \emph{conformal prediction sets} are defined
using preimages of the $p$-value function.
\begin{definition}\label{definition:conformal-pred-sets}
  Given a non-conformity score $A$, the \textbf{conformal prediction
  set} $\Gamma_{n,A}^{\epsilon}$ is defined as follows:
  \[
    \Gamma_{n,A}^\epsilon
    :=
    {p_{n,A}}^*((\epsilon,1])
  \]
\end{definition}

\begin{observation}\label{observation:uncurried-form-pred-set}
  In the literature on conformal prediction, conformal prediction sets are often
  described in \emph{uncurried form}.  In this form, we have a sample $D$ from the
  exchangeable distribution in question, and we write

  \[\Gamma_{n,A}^\epsilon(D) := \{(x,y) \, | \, p_{n,A}(D,x,y) > \epsilon\} \]

  When $D$ is a random variable then
  \[\Gamma_{n,A}^\epsilon \equiv \{(z_1,\ldots,z_{n+1}) \, | \, z_{n+1} \in \Gamma_{n,A}^\epsilon(z_1,\ldots,z_n)\}.\]
  We can also use DeFinetti notation for events and
  write $\{p_{n,A} > \epsilon\}$, $\{Z_{n+1} \in \Gamma\}$.  In this way, for random variables, $Z_1,\ldots,Z_{n+1}$, we can equate the events
  \[
    \{p_{n,A} > \epsilon\} \equiv \{Z_{n+1} \in \Gamma_{n,A}(Z_1,\ldots,Z_n)\} \equiv \Gamma_{n,A}^\epsilon
  \]

  In fact a further uncurrying is even used.

  \[
    \Gamma_{n,A}^\epsilon(D,x) := \{y \, | \, p_{n,A}(D,x,y) > \epsilon\}
  \]

  The final version requires more care when making probabilistic statements.  In
  particular we have no \emph{a priori} assumptions on the (marginal) distributions on the
  product space $\X\x\Y$ other than that $\Y$ is finite as used
  in, e.g., \cite{shafer2008tutorial,book:vovkConformal} for establishing validity when we
  uncurry all the way to the last $x$.
\end{observation}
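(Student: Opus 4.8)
The plan is to observe that every assertion in this observation is an instance of one fact: currying and uncurrying the $p$-value function changes only how we package the data, not the underlying measurable subset of $(\X\x\Y)^{n+1}$ it cuts out. So the proof is a careful unfolding of Definition \ref{definition:conformal-pred-sets} and the definition of the $p$-value function, together with a measurability check, and I expect the only hypothesis that genuinely does work to be finiteness of $\Y$.

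First I would fix the canonical identification $(\X\x\Y)^n \x \X \x \Y \cong (\X\x\Y)^{n+1}$ and recall that $\Gamma_{n,A}^\epsilon = {p_{n,A}}^*((\epsilon,1])$; this is a measurable subset of $(\X\x\Y)^{n+1}$ since $p_{n,A}$ is measurable and $(\epsilon,1]$ generates the Borel $\sigma$-algebra of $[0,1]$. Then I would simply unfold membership: a tuple $(b_1,\ldots,b_n,(x,y))$ lies in $\Gamma_{n,A}^\epsilon$ iff $p_{n,A}((b_1,\ldots,b_n),x,y) > \epsilon$, which is verbatim the defining condition for $(x,y)\in\Gamma_{n,A}^\epsilon(b_1,\ldots,b_n)$. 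That is exactly the first displayed equivalence, and it is a genuine equality of sets requiring no probabilistic input.

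Next, for the event-level chain I would take an exchangeable random variable $\langle Z_1,\ldots,Z_{n+1}\rangle:\Omega\to(\X\x\Y)^{n+1}$ and spell out the DeFinetti notation: $\{p_{n,A}>\epsilon\}$ names the event $\langle Z_1,\ldots,Z_{n+1}\rangle^{-1}(\Gamma_{n,A}^\epsilon)$, while $\{Z_{n+1}\in\Gamma_{n,A}(Z_1,\ldots,Z_n)\}$ names $\langle Z_1,\ldots,Z_{n+1}\rangle^{-1}$ of the set $\{(z_1,\ldots,z_{n+1})\mid z_{n+1}\in\Gamma_{n,A}^\epsilon(z_1,\ldots,z_n)\}$. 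By the set equality just established these two subsets of $(\X\x\Y)^{n+1}$ coincide, hence so do the events; and since all names involved are preimages of a single fixed measurable set, there is nothing left to check. This yields $\{p_{n,A}>\epsilon\}\equiv\{Z_{n+1}\in\Gamma_{n,A}(Z_1,\ldots,Z_n)\}\equiv\Gamma_{n,A}^\epsilon$. I would also note that $\Gamma_{n,A}^\epsilon(D)$ is literally the section $(\Gamma_{n,A}^\epsilon)_D\subseteq\X\x\Y$, which is measurable because sections of measurable subsets of a product are measurable.

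Finally, for the fully uncurried form $\Gamma_{n,A}^\epsilon(D,x) := \{y\mid p_{n,A}(D,x,y)>\epsilon\}$ I would argue it is well-posed precisely because $\Y$ is finite with the discrete $\sigma$-algebra: for each fixed $y\in\Y$ the section $(D,x)\mapsto p_{n,A}(D,x,y)$ is measurable, so $\{(D,x)\mid p_{n,A}(D,x,y)>\epsilon\}$ is measurable, and as $y$ ranges over the finite set $\Y$ one recovers $\Gamma_{n,A}^\epsilon(D)=\{(x,y)\mid y\in\Gamma_{n,A}^\epsilon(D,x)\}$ as a finite union, so $\{(D,x,y)\mid y\in\Gamma_{n,A}^\epsilon(D,x)\}$ is once again just $\Gamma_{n,A}^\epsilon$ and the fibering over $x$ introduces no pathology. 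I expect this appeal to finiteness of $\Y$ to be the one load-bearing step — the step I would be most careful about — since it is exactly what lets us pass from $\Gamma_{n,A}^\epsilon(D)\subseteq\X\x\Y$ to the $\Y$-indexed family $\Gamma_{n,A}^\epsilon(D,x)$ without imposing any assumption on the marginal distributions on $\X\x\Y$, and it is why the cited validity results that uncurry all the way to the last $x$ all require $\Y$ finite.
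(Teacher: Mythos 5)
Your unfolding of Definition \ref{definition:conformal-pred-sets} and the $p$-value function is exactly the justification the paper leaves implicit --- the observation is stated without proof, and the whole content is the set-level identity $\Gamma_{n,A}^\epsilon = \{(b_1,\ldots,b_n,(x,y)) \mid p_{n,A}((b_1,\ldots,b_n),x,y)>\epsilon\}$ together with the fact that all three event notations are preimages of this one measurable set, which you establish correctly. One minor nuance: the paper's ``requires more care'' about the fully uncurried $\Gamma_{n,A}^\epsilon(D,x)$ is aimed at probabilistic (validity) statements made after fixing the last $x$, where the cited references invoke finiteness of $\Y$, rather than at measurability or well-posedness of the fibered set, which is where you place the weight --- but this does not affect the correctness of your argument.
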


Below, given a random variable $\<Z_1,\ldots,Z_{n+1}\>:\Omega\to (\X\x
\Y)^{n+1}$, we denote by $\mathbb{P}_{\langle Z_1,\ldots,Z_{n+1}\rangle}$
the pushforward measure on $(\X\x \Y)^{n+1}$.
\begin{theorem}\label{theorem:vovk-validity}[Vovk \emph{et al.} \cite{article:conformal-initial,article:conformal-transduction,book:vovkConformal}]
  For any exchangeable sequence of variables $\<Z_1,\ldots,Z_{n+1}\>: \Omega \to (\X \x \Y)^{n+1}$, we have
  \[
    \mathbb{P}_{\<Z_1,\ldots,Z_{n+1}\>}(\Gamma_{n,A}^\epsilon) \geq 1 - \epsilon
  \]
\end{theorem}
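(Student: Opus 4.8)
The plan is to reduce the statement to a combinatorial fact about exchangeable real-valued vectors, along the classical lines of Vovk \emph{et al.} Write $N := n+1$. Given a realization of $\langle Z_1,\ldots,Z_N\rangle$, form the $N$ non-conformity scores
\[
  \alpha_i \;:=\; A_{N}\bigl((Z_1,\ldots,Z_{N}),\,Z_i\bigr), \qquad i = 1,\ldots,N .
\]
By Definition \ref{definition:conformal-pred-sets}, the $p$-value function, and Observation \ref{observation:uncurried-form-pred-set}, the set $\Gamma_{n,A}^{\epsilon}\subseteq(\X\x\Y)^{n+1}$ is exactly the event $\{p_{n,A} > \epsilon\}$, where $p_{n,A} = R/N$ and $R := \bigl|\{\, i \mid \alpha_i \geq \alpha_{N}\,\}\bigr|$. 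Hence $\mathbb{P}_{\langle Z_1,\ldots,Z_{n+1}\rangle}(\Gamma_{n,A}^{\epsilon}) = \mathbb{P}(R/N > \epsilon)$, and it suffices to show $\mathbb{P}(R/N \leq \epsilon) \leq \epsilon$.

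The first step is to observe that $(\alpha_1,\ldots,\alpha_{N})$ is again exchangeable. Here the symmetry axiom on $A_{N}$ does the work: permuting the first $N$ arguments by $\sigma$ leaves the value unchanged, so the assignment $(z_1,\ldots,z_{N}) \mapsto (\alpha_1,\ldots,\alpha_{N})$ is an $S_{N}$-equivariant measurable map — applying $\sigma$ to the inputs merely relabels which argument is plugged in ``last'' and therefore permutes the output coordinates by $\sigma$. Composing this equivariance with the hypothesis that $\langle Z_1,\ldots,Z_{N}\rangle$ is exchangeable gives $(\alpha_1,\ldots,\alpha_{N}) \stackrel{d}{=} (\alpha_{\sigma(1)},\ldots,\alpha_{\sigma(N)})$ for every $\sigma\in S_{N}$. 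Measurability is not a concern: each $\alpha_i$, and hence $R$ and $p_{n,A}$, are measurable — as already recorded for $p_{n,A}$ — so all the events below are genuine.

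The second step is the combinatorial core: for any exchangeable $\real$-valued vector $(\alpha_1,\ldots,\alpha_{N})$ and any integer $m$, putting $R_j := |\{\, i \mid \alpha_i \geq \alpha_j\,\}|$, one has $\mathbb{P}(R_{N}\leq m) \leq m/N$. Two ingredients are needed: (i) by exchangeability $R_j \stackrel{d}{=} R_{N}$ for every $j$, because $R_j$ arises from $R_{N}$ by precomposing with a permutation that sends coordinate $N$ to coordinate $j$; (ii) deterministically $\sum_{j=1}^{N}\mathbf{1}[R_j \leq m] \leq m$ — if $j_1,\ldots,j_t$ are all the indices with $R_{j_\ell}\leq m$ and $j^{\ast}$ is one of them with $\alpha_{j^{\ast}}$ minimal, then every $j_\ell$ satisfies $\alpha_{j_\ell}\geq\alpha_{j^{\ast}}$, whence $\{j_1,\ldots,j_t\}\subseteq\{\, i \mid \alpha_i\geq\alpha_{j^{\ast}}\,\}$ and so $t \leq R_{j^{\ast}} \leq m$. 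Taking expectations in (ii) and substituting (i) yields $N\,\mathbb{P}(R_{N}\leq m) \leq m$. Finally apply this with $m=\lfloor\epsilon N\rfloor$: since $R$ is integer-valued, $R/N\leq\epsilon$ iff $R\leq\lfloor\epsilon N\rfloor$, so $\mathbb{P}(p_{n,A}\leq\epsilon) \leq \lfloor\epsilon N\rfloor/N \leq \epsilon$, which is precisely the claimed bound.

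The one genuinely delicate point I anticipate is the treatment of ties in step (ii): because the $p$-value uses the non-strict inequality $\alpha_i\geq\alpha_{N}$, the indicator-sum bound must be justified through the $\alpha$-minimal representative of $\{\, j \mid R_j\leq m\,\}$, rather than by any ``each small $R_j$ picks out a distinct rank'' heuristic, which fails when scores coincide. Everything else — unwinding the uncurried description of $\Gamma_{n,A}^{\epsilon}$ from Observation \ref{observation:uncurried-form-pred-set}, verifying the $S_N$-equivariance claim, and the expectation manipulation — is routine bookkeeping.
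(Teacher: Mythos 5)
Your proposal is correct. The paper itself does not prove this theorem --- it simply defers to the cited references of Vovk \emph{et al.} --- and your argument is exactly the classical proof those references contain: the $S_{N}$-equivariance of the score vector, the distributional identity $R_j \stackrel{d}{=} R_N$ from exchangeability, and the deterministic bound $\sum_j \mathbf{1}[R_j \leq m] \leq m$, with the tie case handled correctly via the $\alpha$-minimal representative.
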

\begin{proof}
  See \cite{article:conformal-initial,conference:full-proof-conformal-plus-dist-shift} for a proof.
\end{proof}

Using the more common notation in the literature, described in Observation \ref{observation:uncurried-form-pred-set},
we would write $\mathbb{P}_{\<Z_1,\ldots,Z_{n+1}\>}(Z_{n+1} \in \Gamma_{n,A}^\epsilon(Z_1,\ldots,Z_n)) \geq 1-\epsilon$
instead of the form given in Theorem \ref{theorem:vovk-validity}.

The above uses the (exchangeable) distribution induced by the exchangeable random variable $\<Z_1,\ldots,Z_{n+1}\>$, but we
could equally well pull this straight back to the latent probability
space $\Omega$:

\begin{corollary}
  Given $\<Z_1,\ldots,Z_{n+1}\>: \Omega \to (\X\x\Y)^{n+1}$, an
  exchangeable sequence of random variables,
  $\mathbb{P}_{\Omega}(\<Z_1,\ldots,Z_{n+1}\>^*(\Gamma_{n,A}^\epsilon)) \geq 1-\epsilon$.
\end{corollary}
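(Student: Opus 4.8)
The plan is to observe that this corollary is a purely formal consequence of Theorem~\ref{theorem:vovk-validity} together with the defining property of a pushforward measure. Recall that for a sequence-valued random variable $Z := \<Z_1,\ldots,Z_{n+1}\>\colon \Omega \to (\X\x\Y)^{n+1}$, the pushforward $\mathbb{P}_{\<Z_1,\ldots,Z_{n+1}\>}$ is \emph{by definition} the measure that sends a measurable set $U \subseteq (\X\x\Y)^{n+1}$ to $\mathbb{P}_{\Omega}(Z^*(U))$. So the only real content of the argument is to instantiate this identity at $U := \Gamma_{n,A}^\epsilon$ and read off the inequality from Vovk's theorem.

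Before doing that, I would check that $\Gamma_{n,A}^\epsilon$ is a legitimate argument for the pushforward, i.e.\ that it is measurable. This is immediate from Definition~\ref{definition:conformal-pred-sets}: we have $\Gamma_{n,A}^\epsilon = {p_{n,A}}^*((\epsilon,1])$, the $p$-value function $p_{n,A}$ is measurable (as remarked immediately after its definition, being built from the measurable scores $A_{n+1}$ by finite counting), and $(\epsilon,1]$ is a Borel subset of $[0,1]$; hence $\Gamma_{n,A}^\epsilon \in \Sigma_{(\X\x\Y)^{n+1}}$. Consequently $\<Z_1,\ldots,Z_{n+1}\>^*(\Gamma_{n,A}^\epsilon)$ is a measurable subset of $\Omega$ and $\mathbb{P}_{\Omega}(\<Z_1,\ldots,Z_{n+1}\>^*(\Gamma_{n,A}^\epsilon))$ is well defined.

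Then I would conclude in one line. Since $\<Z_1,\ldots,Z_{n+1}\>$ is exchangeable, its pushforward $\mathbb{P}_{\<Z_1,\ldots,Z_{n+1}\>}$ is an exchangeable distribution on $(\X\x\Y)^{n+1}$, so Theorem~\ref{theorem:vovk-validity} yields $\mathbb{P}_{\<Z_1,\ldots,Z_{n+1}\>}(\Gamma_{n,A}^\epsilon) \geq 1-\epsilon$. By the defining property of the pushforward recalled above, the left-hand side is literally equal to $\mathbb{P}_{\Omega}(\<Z_1,\ldots,Z_{n+1}\>^*(\Gamma_{n,A}^\epsilon))$, which is the assertion of the corollary.

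There is no genuine obstacle here; the proof is a change-of-variables bookkeeping step. The only point that warrants a sentence of care is keeping the curried event $\Gamma_{n,A}^\epsilon \subseteq (\X\x\Y)^{n+1}$ distinct from the uncurried descriptions of Observation~\ref{observation:uncurried-form-pred-set} — the pullback computation is cleanest in the curried form, where $\<Z_1,\ldots,Z_{n+1}\>^*(\Gamma_{n,A}^\epsilon)$ is visibly just the event $\{\,\omega \mid p_{n,A}(Z_1(\omega),\ldots,Z_n(\omega),Z_{n+1}(\omega)) > \epsilon\,\}$, so that the corollary is exactly the ``latent-space'' restatement of Vovk's validity theorem.
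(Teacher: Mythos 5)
Your proof is correct and matches the paper's (implicit) argument: the paper treats this corollary as the immediate pullback-to-$\Omega$ restatement of Theorem \ref{theorem:vovk-validity}, which is exactly your change-of-variables bookkeeping via the definition of the pushforward measure. The added measurability check on $\Gamma_{n,A}^\epsilon$ is a reasonable extra sentence of care, not a divergence in approach.
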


We depict the intuitive idea behind the Conformal Separability Test in Figure \ref{figure:conformal-sep-overlap}.
\begin{figure}
  \begin{center}
    \includegraphics[trim={7cm 3cm 7cm 2cm},clip,scale=0.35]{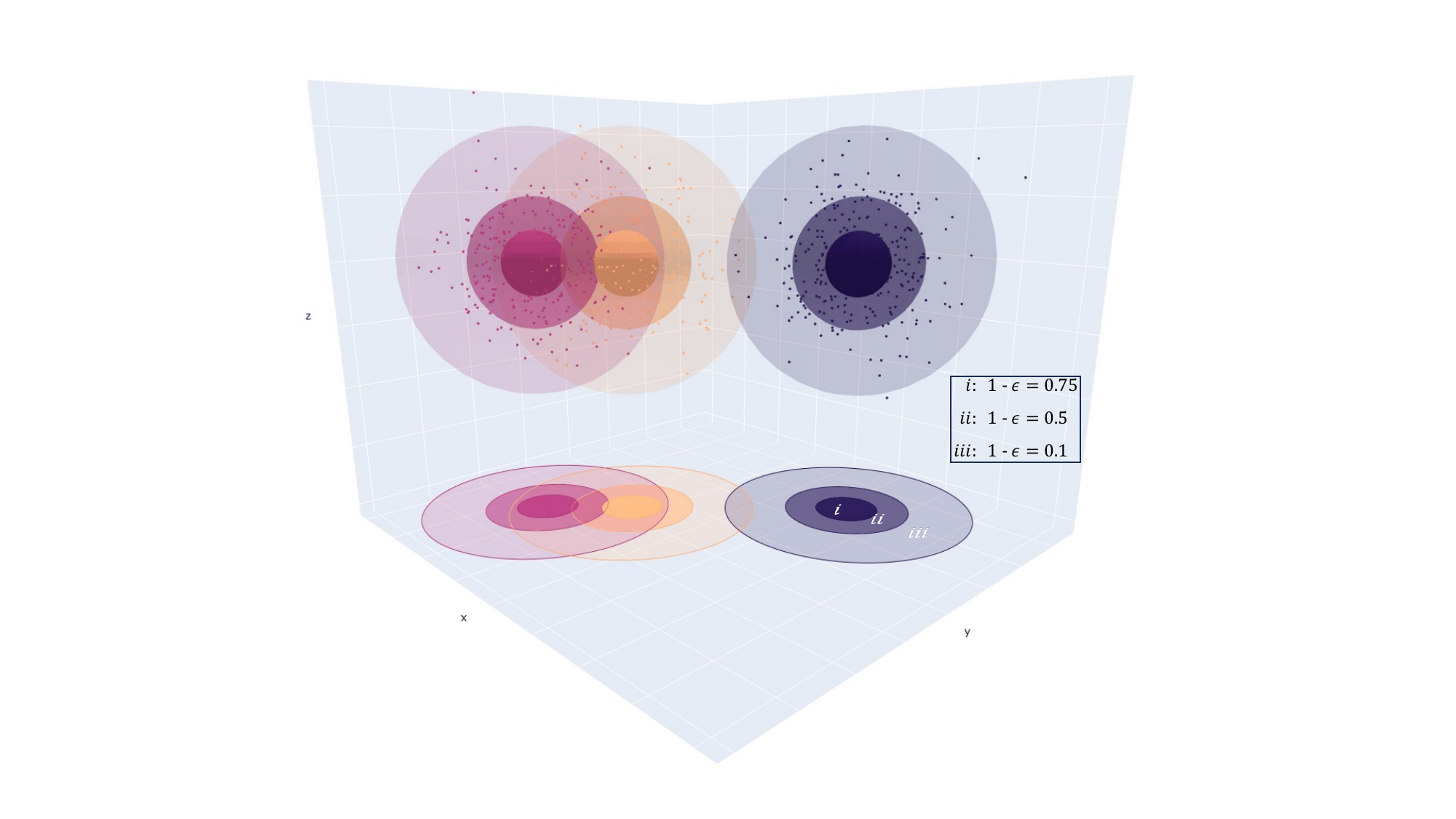}
  \end{center}
  \caption{Visualization of Conformal Separability. Conformal
    Separability is a conservative test for inequality of conformal
    prediction sets. Here the magenta
    and orange sets are not conformally separable until confidence
    is high.  The blue and orange spheres are much more separable, as
    they do not overlap
    until the confidence becomes remarkably low.  For the red and blue to overlap; one would need
    astronomically low confidence. This is akin to the test for normal
    distribution separation checking overlap of confidence intervals;
    but we are porting this to a distribution-free setting where the test is
  over arbitrarily irregular distributions on manifolds of high-dimension.}
  \label{figure:conformal-sep-overlap}
\end{figure}
There are less conservative tests than this, but they require making
additional trade-offs (e.g., fixing a model or accepting additional
computation). For example, $2$-sample $t$-tests (e.g., the Welch's test) can
detect significant difference with small p-values for largely overlapping confidence intervals, and one can do even
better when the variances are known to be the same or close.

In the following, we let $\mathbb{P}_{D_1,D_2}$, where $|D_1|=n$ and $|D_2|=m$, denote the induced (pushforward) measure
\[
  \Sigma_{\Z^n\x \Z^m} \to^{\<D_1,D_2\>^*} \Sigma_{\Omega}
  \to^{\mathbb{P}} [0,1].
\]
Let $\Delta: \Omega \to \Omega \x \Omega$ be $\omega \mapsto (\omega,\omega)$.  Then $\Delta^{*}(U\x V) = U\cap V$.
Then factor $\<D_1,D_2\>: \Omega \to \Z^n\x \Z^m$ as
\[
  \<D_1,D_2\> = \Omega \to^{\Delta} \Omega \x \Omega \to^{D_1 \x D_2} \Z^n\x\Z^m.
\]
Thus $\mathbb{P}_{D_1,D_2}(U \x V) = \mathbb{P}(D_1^*(U) \cap D_2^*(V))$ for every $U\in \Sigma_{\Z^n},V\in \Sigma_{\Z^m}$.
Thus $\mathbb{P}_{D_1,D_2}$ is canonically measuring an intersection event.  We may sometimes write simply $\mathbb{P}_{D_1,D_2}(U,V)$.

\begin{definition}
  Let a non-conformity score $A$ be given and assume $D_1$ and $D_2$
  are sequences of random variables. Fix $\epsilon,\gamma,\delta \in (0,1)$.
  $D_1$ and $D_2$ are \textbf{$A$-conformal
  $(\epsilon,\gamma,\delta)$-separable} when
  \[\mathbb{P}_{D_1,D_2}(\Gamma_{n_1,A}^{\epsilon} , \Gamma_{n_2,A}^{\gamma}) < \delta,\]
  where $\mathbb{P}_{D_1,D_2}$ is the probability distribution induced
  by $D_1$ and $D_2$.
\end{definition}
Since $\epsilon,\gamma,\delta >0$, when $D_1$ and $D_2$ have an $(\epsilon,\gamma,\delta)$-separator, they are are disequal on
a non-measure zero set, hence they are not almost surely equal.

Crucially, exchangeability forces an upper bound on separability.

\begin{theorem}\label{theorem:no-exch-strong-sep}
  \Copy{theorem:no-exch-strong-sep}{
    Let $A$ be a non-conformity score and
    suppose $D_1:\Omega \to \Z^n$ and $D_2: \Omega \to \Z^m$ are
    independent as two sequence-valued random variables such that each
    of $D_1$ and $D_2$ are individually exchangeable. Then $D_1$ and
    $D_2$ cannot be $A$-conformal
  $(\epsilon,\gamma,\delta)$-separable for any $\delta <(1-\epsilon)(1-\gamma)$.}
\end{theorem}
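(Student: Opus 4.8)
The plan is to reduce the statement to two essentially independent invocations of conformal validity, glued together by the independence hypothesis. The structural fact that makes this work is the one recorded just before the definition of separability: $\mathbb{P}_{D_1,D_2}$ is by construction an intersection measure. Unwinding the factorization $\langle D_1,D_2\rangle = (D_1\x D_2)\circ\Delta$ through the diagonal $\Delta\colon\Omega\to\Omega\x\Omega$, one has, for the relevant conformal prediction sets,
\[
  \mathbb{P}_{D_1,D_2}(\Gamma_{n_1,A}^{\epsilon},\Gamma_{n_2,A}^{\gamma})
  = \mathbb{P}\bigl(D_1^{*}(\Gamma_{n_1,A}^{\epsilon})\cap D_2^{*}(\Gamma_{n_2,A}^{\gamma})\bigr),
\]
where each $\Gamma_{n_i,A}^{\cdot}$ is a genuinely measurable subset of the appropriate power of $\Z$ — it is a preimage of $(\epsilon,1]$ (resp.\ $(\gamma,1]$) under the $p$-value function — so each $D_i^{*}(\Gamma_{n_i,A}^{\cdot})$ is an honest event in $\Sigma_\Omega$.

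Next I would apply Theorem~\ref{theorem:vovk-validity}, in the pulled-back-to-$\Omega$ form of its corollary, separately to $D_1$ (with level $\epsilon$) and to $D_2$ (with level $\gamma$). Each $D_i$ is \emph{individually} exchangeable, which is exactly the hypothesis validity needs, so the corollary yields $\mathbb{P}\bigl(D_1^{*}(\Gamma_{n_1,A}^{\epsilon})\bigr)\ge 1-\epsilon$ and $\mathbb{P}\bigl(D_2^{*}(\Gamma_{n_2,A}^{\gamma})\bigr)\ge 1-\gamma$. It is worth emphasising that this step uses nothing about the joint law: the concatenation $\langle D_1,D_2\rangle$ need not be exchangeable (the marginals of $D_1$ and $D_2$ can genuinely differ), which is precisely why the correct hypothesis is ``each individually exchangeable, and the two independent'' rather than ``jointly exchangeable''.

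Now I would bring in independence. Independence of $D_1$ and $D_2$ as sequence-valued random variables says exactly that $\mathbb{P}_{D_1,D_2}=\mathbb{P}_{D_1}\ox\mathbb{P}_{D_2}$ on the product $\sigma$-algebra, equivalently that the events $D_1^{*}(U)$ and $D_2^{*}(V)$ are independent for all measurable $U,V$. Taking $U=\Gamma_{n_1,A}^{\epsilon}$ and $V=\Gamma_{n_2,A}^{\gamma}$ and combining with the two validity bounds gives
\[
  \mathbb{P}_{D_1,D_2}(\Gamma_{n_1,A}^{\epsilon},\Gamma_{n_2,A}^{\gamma})
  = \mathbb{P}\bigl(D_1^{*}(\Gamma_{n_1,A}^{\epsilon})\bigr)\,\mathbb{P}\bigl(D_2^{*}(\Gamma_{n_2,A}^{\gamma})\bigr)
  \ge (1-\epsilon)(1-\gamma).
\]
Since $\epsilon,\gamma\in(0,1)$, the right-hand side is strictly positive, so for every $\delta<(1-\epsilon)(1-\gamma)$ the defining inequality $\mathbb{P}_{D_1,D_2}(\Gamma_{n_1,A}^{\epsilon},\Gamma_{n_2,A}^{\gamma})<\delta$ fails; hence $D_1$ and $D_2$ are not $A$-conformal $(\epsilon,\gamma,\delta)$-separable for any such $\delta$, which is the claim.

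Essentially all of the ingredients are already in the excerpt — Vovk validity and its $\Omega$-form corollary, measurability of $p_{n,A}$ and of the prediction sets, and the intersection reading of $\mathbb{P}_{D_1,D_2}$ — and the remaining fact (independence of the pulled-back events) is a one-line measure-theoretic remark, so I do not expect a genuine obstacle. The only point that wants a careful sentence is the index bookkeeping: the statement writes $D_1\colon\Omega\to\Z^n$ and $D_2\colon\Omega\to\Z^m$, whereas the separability definition refers to $\Gamma_{n_1,A}^{\epsilon}\subseteq\Z^{n_1+1}$ and $\Gamma_{n_2,A}^{\gamma}\subseteq\Z^{n_2+1}$, so one should fix $n_1=n-1$ and $n_2=m-1$ up front so that the two applications of validity land on the correct powers, after which everything lines up.
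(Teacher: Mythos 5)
Your proposal is correct and follows essentially the same route as the paper's own proof: factor $\mathbb{P}_{D_1,D_2}$ into a product via independence, bound each factor below by $1-\epsilon$ and $1-\gamma$ using Vovk validity applied to each individually exchangeable sequence, and conclude the separability inequality cannot hold for $\delta<(1-\epsilon)(1-\gamma)$. Your additional remarks on measurability and on the index alignment between $\Z^n$, $\Z^m$ and the prediction sets are just a more careful spelling-out of the same one-line argument the paper gives.
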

Note that Theorem \ref{theorem:no-exch-strong-sep} does \emph{not} assume
that $D_1$ and $D_2$ are individually IID; we are only assuming that,
individually, they are exchangeable, and that as random variables
valued in sequences these sequences are independent.
\begin{proof}
  The proof follows nearly immediately from independence and Theorem \ref{theorem:vovk-validity}:
  \[
    \mathbb{P}_{D_1,D_2}(\Gamma_{n,A}^\epsilon , \Gamma_{m,A}^\gamma)
    = \mathbb{P}_{D_1}(\Gamma_{n,A}^\epsilon) \cdot \mathbb{P}_{D_2}(\Gamma_{m,A}^\gamma)
    \geq (1-\epsilon) \cdot (1-\gamma).
  \]
\end{proof}

In Theorem \ref{theorem:no-exch-strong-sep}, we are measuring a set of the form $A\cap B$.
Also recall that for conformal prediction we view measuring $\Gamma_{n,A}^\epsilon$ intuitively
as $Z_{n+1}\in \Gamma_{n,A}^\epsilon$.  Strictly speaking, that statement is not well typed since as a set $Z_{n+1}\in \Gamma_{n,A}^\epsilon$ iff $p(D,Z_{n+1})>\epsilon$, hence the event
defined by the indicator of those predicates coincides.  To have a
similar intuition for separability, and indeed, to define the test, we need to ensure that the last random variable
of each sequences is the same.  The following is as such a consequence of Theorem \ref{theorem:no-exch-strong-sep}.
Thus if we have random variables of the form $\<D_1,Z_{\nu}\>$ and $\<D_2,Z_{\nu}\>$, we may soundly write
$Z_{\nu}\in \Gamma_{n+1,A}^\epsilon(D_1) \cap \Gamma_{m+1,A}^\gamma(D_2)$
for the pullback of $[0,\epsilon)\x [0,\gamma)$.

\begin{corollary}\label{theorem:no-exch-strong-sep-intuitive}
  Let $A$ be a non-conformity score and suppose $\<D_1,Z_{\nu}\>: \Omega \to \Z^{n+1}$ and $\<D_2,Z_{\nu}\>: \Omega \to \Z^{m+1}$
  are independent as two sequence valued random variables such that both $D_1,D_2$ are individually exchangeable.
  Then $\<D_1,Z_{\nu}\>$ and $\<D_2,Z_{\nu}\>$ cannot be $A$-confomal $(\epsilon,\gamma,\delta)$-separable for any $\delta < (1-\epsilon)\cdot(1-\gamma)$.
  Equivalently,
  \[
    \mathbb{P}_{\<D_1,Z_\nu\>,\<D_2,Z_\nu\>}(Z_{\nu} \in \Gamma_{n,A}^\epsilon(D_1) \cap \Gamma_{n,A}^\epsilon(D_2)) \geq (1-\epsilon)\cdot(1-\gamma)
  \]
\end{corollary}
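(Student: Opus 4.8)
The plan is to reduce the statement to Theorem~\ref{theorem:no-exch-strong-sep}, applied not to $D_1$ and $D_2$ directly but to the two concatenated sequences $D_1' := \<D_1,Z_\nu\> : \Omega \to \Z^{n+1}$ and $D_2' := \<D_2,Z_\nu\> : \Omega \to \Z^{m+1}$. First I would record the two inputs that theorem needs: (i) $D_1'$ and $D_2'$ are independent as sequence-valued random variables, which is verbatim the hypothesis; and (ii) each of $D_1'$ and $D_2'$ is individually exchangeable of its respective width. Granting (i) and (ii), Theorem~\ref{theorem:no-exch-strong-sep} immediately yields that $D_1'$ and $D_2'$ cannot be $A$-conformal $(\epsilon,\gamma,\delta)$-separable for any $\delta < (1-\epsilon)(1-\gamma)$, i.e.
\[
  \mathbb{P}_{D_1',D_2'}(\Gamma_{n,A}^{\epsilon},\Gamma_{m,A}^{\gamma}) \;\ge\; (1-\epsilon)(1-\gamma).
\]

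The rest is unwinding the curry/uncurry dictionary of Observation~\ref{observation:uncurried-form-pred-set} together with the intersection-measure identity established just before the statement, so as to rewrite the left-hand side as $\mathbb{P}(Z_\nu \in \Gamma_{n,A}^{\epsilon}(D_1) \cap \Gamma_{m,A}^{\gamma}(D_2))$. Concretely: factoring $\<D_1',D_2'\>$ through the diagonal $\Delta : \Omega \to \Omega \x \Omega$ gives $\mathbb{P}_{D_1',D_2'}(U \x V) = \mathbb{P}\big((D_1')^*(U) \cap (D_2')^*(V)\big)$; take $U = \Gamma_{n,A}^{\epsilon} \subseteq \Z^{n+1}$ and $V = \Gamma_{m,A}^{\gamma} \subseteq \Z^{m+1}$, and use the identity $\Gamma_{n,A}^{\epsilon} = \{(z_1,\ldots,z_{n+1}) : z_{n+1} \in \Gamma_{n,A}^{\epsilon}(z_1,\ldots,z_n)\}$ to see that $(D_1')^*(\Gamma_{n,A}^{\epsilon})$ is precisely the event $\{Z_\nu \in \Gamma_{n,A}^{\epsilon}(D_1)\}$, and symmetrically for $D_2'$. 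Intersecting these two events and combining with the displayed bound gives the asserted inequality, and its equivalence with non-separability is immediate from the definition of $(\epsilon,\gamma,\delta)$-separability.

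An alternative that is probably cleaner to write out avoids invoking Theorem~\ref{theorem:no-exch-strong-sep} as a black box: rerun its one-line argument in the concatenated setting — apply Theorem~\ref{theorem:vovk-validity} to the exchangeable sequence $\<D_1,Z_\nu\>$ to get $\mathbb{P}(Z_\nu \in \Gamma_{n,A}^{\epsilon}(D_1)) \ge 1-\epsilon$, apply it to $\<D_2,Z_\nu\>$ to get $\mathbb{P}(Z_\nu \in \Gamma_{m,A}^{\gamma}(D_2)) \ge 1-\gamma$, note that the first event is $\sigma(\<D_1,Z_\nu\>)$-measurable and the second $\sigma(\<D_2,Z_\nu\>)$-measurable, and multiply the two lower bounds using independence. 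Either way, I expect the only genuinely delicate point to be ingredient (ii): one must ensure exchangeability is available for the concatenated sequences $\<D_i,Z_\nu\>$ of widths $n+1$ and $m+1$ — \emph{not} merely for the $D_i$ on their own — since it is those sequences to which Vovk validity is applied, and this shared-final-coordinate structure is exactly why the corollary is phrased with $Z_\nu$ appended to both. Related care is needed in reading ``independent as sequence-valued random variables'' when the coordinate $Z_\nu$ literally occurs in both sequences; everything past that is definitional unwinding of the curry/uncurry dictionary and the intersection-measure identity set up just above the statement.
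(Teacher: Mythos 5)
Your proposal is correct and is essentially the paper's own route: the paper offers no separate proof, asserting the corollary as an immediate consequence of Theorem \ref{theorem:no-exch-strong-sep} applied to the concatenated sequences $\langle D_1,Z_\nu\rangle$, $\langle D_2,Z_\nu\rangle$ together with the curry/uncurry dictionary of Observation \ref{observation:uncurried-form-pred-set} and the intersection-measure factorization through the diagonal, exactly as you do (and your alternative of rerunning Vovk validity plus independence is just the proof of that theorem inlined). The delicate point you flag --- that exchangeability must hold for the full $(n+1)$- and $(m+1)$-length sequences fed to Vovk validity, and that ``independence'' must be read with care when $Z_\nu$ occurs in both --- is a looseness in the paper's own statement rather than a defect of your argument.
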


Theorem \ref{theorem:no-exch-strong-sep} and Corollary \ref{theorem:no-exch-strong-sep-intuitive}
immediately determine an
algorithm,  our \emph{Conformal Separability Test},
for testing separability. Observe the probability measurement of $\Gamma$, for a sequence valued
random variable $D$, can be
reformulated as $\mathbb{P}_{D}(p_{n,A} > \epsilon) \geq 1-\epsilon$.  Note
that $p$ is a real-valued random variable and hence, by completeness
of the reals, we can take its \cadlag
completion $\overline{p_{n,A}}$ and observe that $\mathbb{P}_{D}(\overline{p_{n,A}} \geq \epsilon) \geq 1-\epsilon$.
For a brief introduction to the \cadlag completion see Exercises
5.13,5.14 in Chapter 1 and 1.17--1.21 in Chapter 2 of \cite{book:erhan-probability-theory}.

However, this can then be immediately reformulated in terms of the CDF of $\overline{p_{n,A}}$, denoted by $F_{p_{n,A}}$:
\[
  1 - F_{p_{n,A}}(\epsilon)
  = 1 - \mathbb{P}_{D}(p_{n,A} \leq \epsilon) = \mathbb{P}_{D}(\overline{p_{n,A}} \geq \epsilon) \geq 1-\epsilon.
\]
This implies that $F_{p_{n,A}}(\epsilon) \leq \epsilon$.

Observe that we have the following identity
\begin{align*}
  \mathbb{P}_{D_1,D_2}(\Gamma_{n,A}^{\epsilon},\Gamma_{m,A}^{\epsilon}) & = 1 -
  F_{\min(p_{n, A},p_{m,A})}(\epsilon).
\end{align*}
This allows us to state the correctness for Algorithm
\ref{algorithm:intersection_test} as follows.


\begin{algorithm}[tb]
  \caption{Conformal Separability Test}
  \label{algorithm:intersection_test}
  \textbf{Input}: Data samples $D_i=(x_1,y_1),\ldots, (x_{n_i},
  y_{n_i})$, for $i=1,2$ to be compared\\
  \textbf{Input}: Non-conformity score $A$\\
  \textbf{Input}: Test input $x$\\
  \textbf{Output}: $p_{\cap}(D_1,D_2,x)$
  \begin{algorithmic} 
    \STATE $p_{\cap}(D_1,D_2,x) := 0$
    \FOR{$y \in Y$}
    \FOR{$i\in [1,2]$}
    \STATE $p_{i,y} := p_A^{D_i}(x,y)$
    \ENDFOR
    \STATE $p_{\cap}(D_1,D_2,x) := \max(p_{\cap}(D_1,D_2,x),\min(p_{1,y},p_{2,y}))$
    \ENDFOR
    \STATE \textbf{return} $p_\cap(D_1,D_2,x)$
  \end{algorithmic}
\end{algorithm}

\begin{theorem}\label{theorem:correctness}
  Let $p_\cap(D_1,D_2,x)$ denote the output of Algorithm \ref{algorithm:intersection_test}.
  Then $\left|\left| p_\cap(D_1,D_2,x) - \mathbb{P}(\Gamma_{n,A}^{\epsilon},\Gamma_{m,A}^{\epsilon}) \right|\right|_\infty \to 0$
  almost surely.
\end{theorem}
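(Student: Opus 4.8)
The plan is to read both objects in the claimed inequality as survival functions of one and the same real random variable and then run a Glivenko--Cantelli argument. Let $W := \min(p_{n,A},p_{m,A})$ be the \emph{intersection $p$-value}: the random variable obtained by feeding a shared future test point through the two inductive $p$-value functions built from $D_1$ and $D_2$ and taking the label-wise minimum, with the finite union over $Y$ folded in so that $W$ is the $\X$-level statistic $x \mapsto \max_{y\in Y}\min\!\big(p^{D_1}_A(x,y),p^{D_2}_A(x,y)\big)$. The identity recorded just before the statement, $\pr{\Gamma_{n,A}^{\epsilon},\Gamma_{m,A}^{\epsilon}} = 1 - F_{\min(p_{n,A},p_{m,A})}(\epsilon)$, says precisely that the target function $\epsilon \mapsto \pr{\Gamma_{n,A}^{\epsilon},\Gamma_{m,A}^{\epsilon}}$ is the survival function of $W$. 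So the theorem follows once we identify what Algorithm~\ref{algorithm:intersection_test} reports with the \emph{empirical} survival function of $W$ over a growing stream of test inputs.

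First I would check the elementary identification at a single test input $x$: unwinding the two nested loops, the algorithm returns exactly $\max_{y\in Y}\min\!\big(p^{D_1}_A(x,y),p^{D_2}_A(x,y)\big)$, and, since $\Gamma$ is a threshold preimage of $p$ (Definition~\ref{definition:conformal-pred-sets}) and both $\min$ and the finite $\max_{y\in Y}$ commute with thresholding, the event $\{\,p_\cap(D_1,D_2,x) > \epsilon\,\}$ is exactly the event that the two $\epsilon$-level conformal prediction sets meet over $x$, i.e. $\Gamma_{n,A}^{\epsilon}(D_1,x)\cap\Gamma_{m,A}^{\epsilon}(D_2,x)\neq\varnothing$; this is the statistic $W$. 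Hence $p_\cap(D_1,D_2,\cdot)$ is a measurable realization of $W$, and its empirical distribution over test inputs has survival function $\widehat G_N(\epsilon):=\tfrac1N\sum_{j\le N}\mathbf{1}[\,W^{(j)} > \epsilon\,]$, which is what the test accumulates. Passing to the \cadlag completion $\overline{p}$ of $p$ (as set up just above the theorem) ensures $F_W$ is a genuine distribution function, so that the supremum over $\epsilon\in(0,1)$ is well behaved and the $\|\cdot\|_\infty$ claim is meaningful; this is also where the discrete $>$ versus $\ge$ subtlety in conformal $p$-values is absorbed.

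For the limit itself: running the test on an exchangeable stream $x^{(1)},x^{(2)},\dots$ of test inputs yields, conditionally on the realized calibration data $(D_1,D_2)$ and on the de Finetti directing random measure, conditionally i.i.d.\ copies $W^{(1)},W^{(2)},\dots$ of $W$; the classical Glivenko--Cantelli theorem applied conditionally gives $\|\widehat G_N - (1 - F_W)\|_\infty \to 0$ almost surely, and since the conditioning variables are fixed the almost-sure uniform bound passes to the unconditional law by integration (equivalently one may invoke Glivenko--Cantelli directly for ergodic/exchangeable sequences). Because $\widehat G_N$ is exactly the running output of the algorithm and $1 - F_W(\epsilon) = \pr{\Gamma_{n,A}^{\epsilon},\Gamma_{m,A}^{\epsilon}}$, this is the assertion $\big\|\,p_\cap(D_1,D_2,x) - \pr{\Gamma_{n,A}^{\epsilon},\Gamma_{m,A}^{\epsilon}}\,\big\|_\infty \to 0$ a.s.

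I expect the main obstacle to be probabilistic bookkeeping rather than any inequality. Two points need care. (i) Measurability and regularity: $p_{n,A}$ is a ratio of cardinalities of level sets of $A_{n+1}$, so one must confirm that $W$ --- obtained from it by $\min$, a finite $\max$ over $Y$, and thresholding --- is a bona fide $[0,1]$-valued random variable with a well-defined law; this is exactly where the \cadlag completion earns its keep, and where a sloppy argument would slip. (ii) The non-i.i.d.\ structure: the test inputs are only exchangeable, and they are coupled through the common calibration sets, so classical Glivenko--Cantelli does not apply verbatim; the remedy is the conditioning step above --- condition on $(D_1,D_2)$ and on the de Finetti directing measure so the inputs become genuinely i.i.d., apply the classical theorem, then check the almost-sure uniform bound survives de-conditioning. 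Everything else is algebra with the identity $\pr{\Gamma_{n,A}^{\epsilon},\Gamma_{m,A}^{\epsilon}} = 1 - F_{\min(p_{n,A},p_{m,A})}(\epsilon)$ and the remark that $\widehat G_N$ is, by construction, precisely what the algorithm prints.
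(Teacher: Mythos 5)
Your proposal is correct and follows essentially the same route as the paper: the paper's (very terse) proof likewise observes that Algorithm \ref{algorithm:intersection_test} computes $1-E$ where $E$ is the empirical CDF of $\min(p_{n_1,A},p_{n_2,A})$, and then invokes the Glivenko--Cantelli theorem (citing the de Finetti/exchangeable variant, which is the conditioning step you spell out) together with the identity $\mathbb{P}_{D_1,D_2}(\Gamma_{n,A}^{\epsilon},\Gamma_{m,A}^{\epsilon}) = 1 - F_{\min(p_{n,A},p_{m,A})}(\epsilon)$. Your additional bookkeeping (the \cadlag completion and the de Finetti conditioning) just makes explicit what the paper leaves implicit.
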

\begin{proof}
  This is a special case of the Glivenko-Cantelli theorem \cite{article:glivenko-cdf,article:cantelli-cdf,article:definetti-cdf}, applied to
  Algorithm \ref{algorithm:intersection_test}, which computes $1-E$ where $E$ is the empirical
  CDF for $\min(p_{n_1,A},p_{n_2,A})$.
\end{proof}

A finite sample version of the above can be obtained from the
the Dvoretsky-Kiefer-Wolfowitz inequality \cite{article:finite-sample-cdf},
applied to Algorithm \ref{algorithm:intersection_test}, noting again
that it computes the complement of the empirical CDF (i.e. the
empirical version of $1-F_{\min(p_{n_1,A},p_{n_2,A})}$).  The finitary bound
incurs another constant and makes carrying it around harder.
We can work with the asymptotic version in this paper, since it is an asymptotic
result about convergence for a finite sample result.

\subsection{Empirical detection of poison attacks using conformal
separability}\label{subsection:detection-conformal}
We will now consider notions of effective attack.  We will show that
when given a dataset random variable $D$ and when $P$ denotes a trigger attack
applied to $D$, that $D,P$ are conformally separable.  In particular, we will
show that the event of observing an empty intersection between conformal prediction
sets is a rare event.  Moreover, an effective attack forces this intersection to be empty.
It then follows immediately that if on average the attack is successful,
then in expectation we will always see a rare event!

Before proceeding, we fix notation.  As before, we assume that $D: \Omega \to (\X \x \Y)^{n+1}$ is
an exchangeable random variable.  As common in statistics literature, we will identify
samples from the distribution induced by $D$ with the random variable $D$.  This is a convenient
abuse of notation since then the sample distribution of $D$ is precisely the distribution induced by $D$.
Likewise, if we post-compose
any $D_i$ with the trigger $t$ (in the sense of Def. \ref{definition:trigger_attack}), we induce a
distribution on $\X\x\Y$.  We will denote the
realizations of such samples by $(t_{x_i},t_{y_i})$.  We will denote by $P$, samples from the induced ``poisoned''
distribution on $(\X\x\Y)^{n+1}$.  In this section, we adopt the
abbreviation $p^{D}(x,y)$ for the p-value $p_{n,A}(D,(x,y))$.

\begin{definition}\label{definition:expiration}
  Given a nonconformity score $A$, dataset $D$ and $x \in \X$ we define the \textbf{expiry of $x$ for $D$} to be the
  random map $\tau^D(x)$ sending $y$ to  $p^D(x,y)$.  In our case,
  $\Y$ is finite and we interpret $\tau^D(x)$ as a random vector in $\real^{|\Y|}$.
\end{definition}

We can now give extra conditions that make detecting separation easy, and also relate
this to the test for poison.

\begin{example}
  Suppose the set $\Y$ consists of labels $a,b,c,d$, and $e$ with and the corresponding
  p-values for some $x$ are $0.76,0.05,0.89,0.09$, and $0.06$.
  Then we visualize $\tau^D(x)$ as follows:
  \begin{center}
    \includegraphics[scale=0.4]{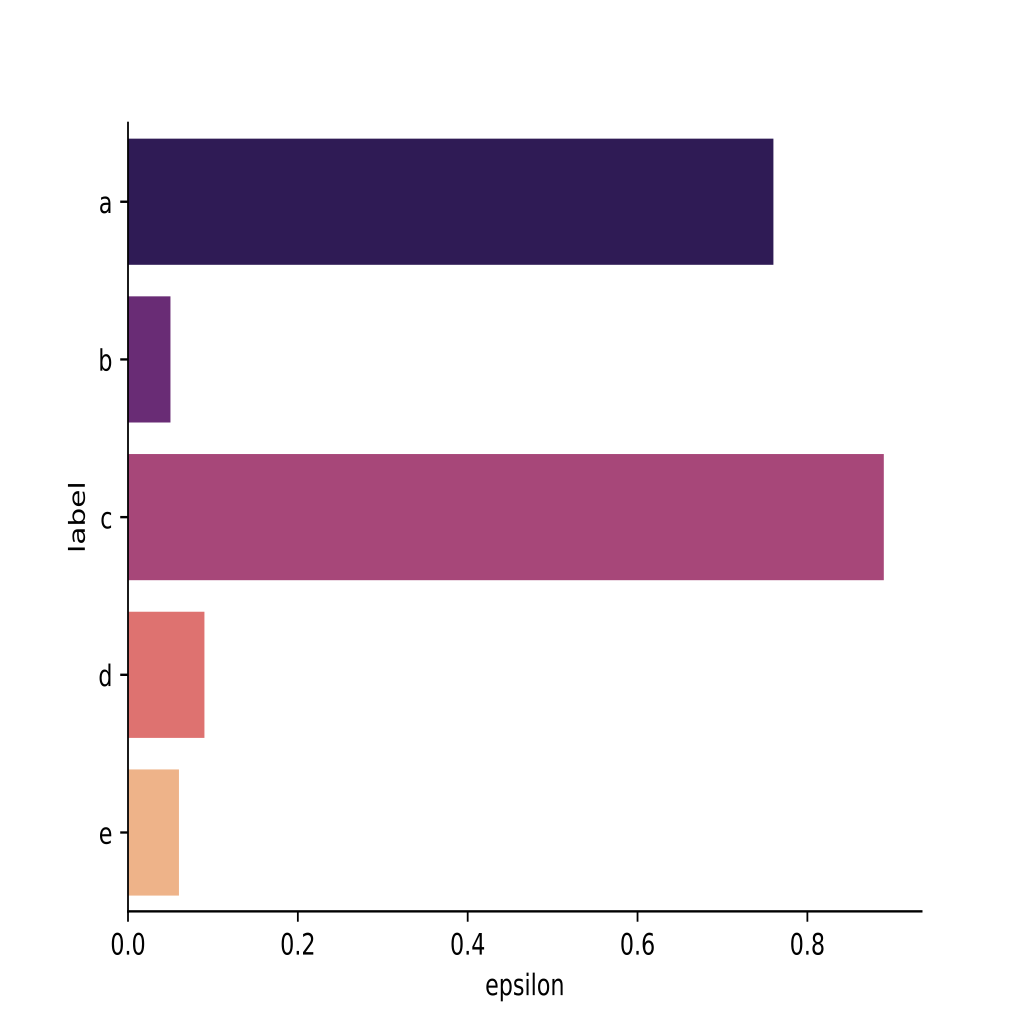}
  \end{center}
  This gives a picture of the conformal prediction set at $x$ and identifies exactly when
  a label leaves the set.
\end{example}

Expiry determines the separability of sets.  For example, suppose we are given $D_1,D_2,x$ with
expiry as indicated in the following figure:
\begin{center}
  \includegraphics[scale=0.4]{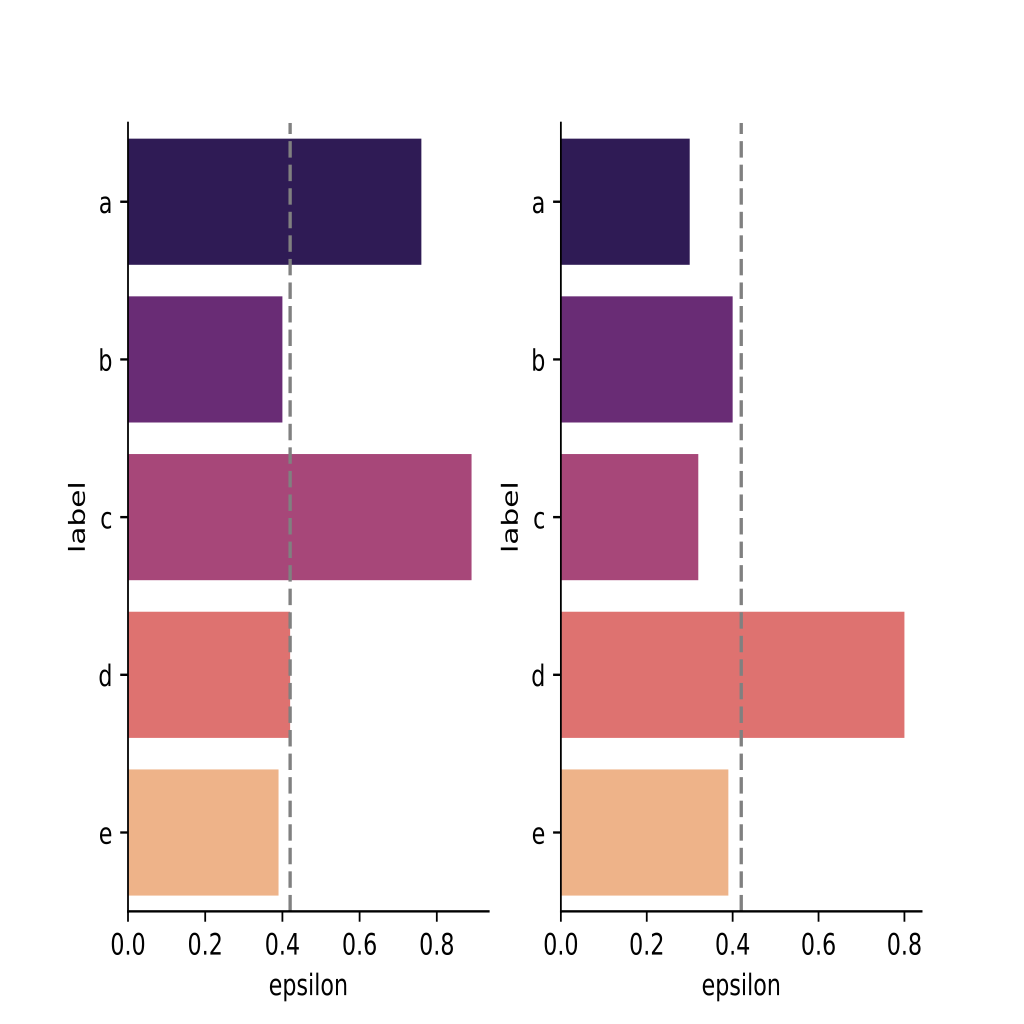}
\end{center}
The dashed line indicates the $\epsilon$ at which $x$ is a separator; the colors to the right of the
dotted line make it in the intersection at each $\epsilon$.


We will now describe how Algorithm \ref{algorithm:intersection_test}
can be used for the detection of poison attacks and we will then
evaluate the mathematical implications of Theorem
\ref{theorem:correctness} to this application. To this end, we will
first formalize the notion of what it means for an attack to be
effective.  As above, we assume that $A$ is a
fixed non-conformity score throughout this section and we caution that
many of the notions defined below depend on the choice of $A$.  We also
maintain the notation that $D$ is split as $D=D_C\uplus D_P$ and that $P:= D_C \uplus t(D_P)$.
In this section, any \emph{attack} referenced will be of the form given in Definition \ref{definition:trigger_attack}.

\begin{definition}\label{definition:non_trivial}
  A trigger attack is \textbf{non-trivial} at $(x,y)$ when
  $p^{D}(t_x,t_y)<p^{D}(t_x,y)$.
\end{definition}

By the strictness of the inequality in Definition \ref{definition:non_trivial}, we have that $y \ne t_y$ giving
an immediately obvious sense of non-triviality.  However, ultimately we want that $t_y$ will end up being
the strongest label when using the poisoned dataset $P$, and this notion of non-triviality also requires that
$t_y$ is not the strongest label to begin with.  In other words, the attack will have to change a ``belief'' in order to
be considered effective.  This brings us to define notions of effectiveness.

\begin{definition}\label{definition:weakly_effective}
  A trigger attack is \textbf{weakly effective} at $(x,y)$, when
  $p^{P}(t_x,y)<p^{P}(t_x,t_y)$.
\end{definition}
Note that \emph{weakly effective} at $(x,y)$ means that, in terms of expiry,
the label $t_y$ expires after $y$.  However, it does not indicate that the attack was confidently successful.
An attack is then intuitively effective when $t_y$ is the most confident label in $\tau^P$.  However, one should wonder if
this notion alone captures attack effectiveness.

\begin{example}
  The following expiry provides a sort of counterexample to bounding the confidence one may have in attack effectiveness
  given non-triviality and weak effectiveness alone:
  \begin{center}
    \includegraphics[scale=0.5]{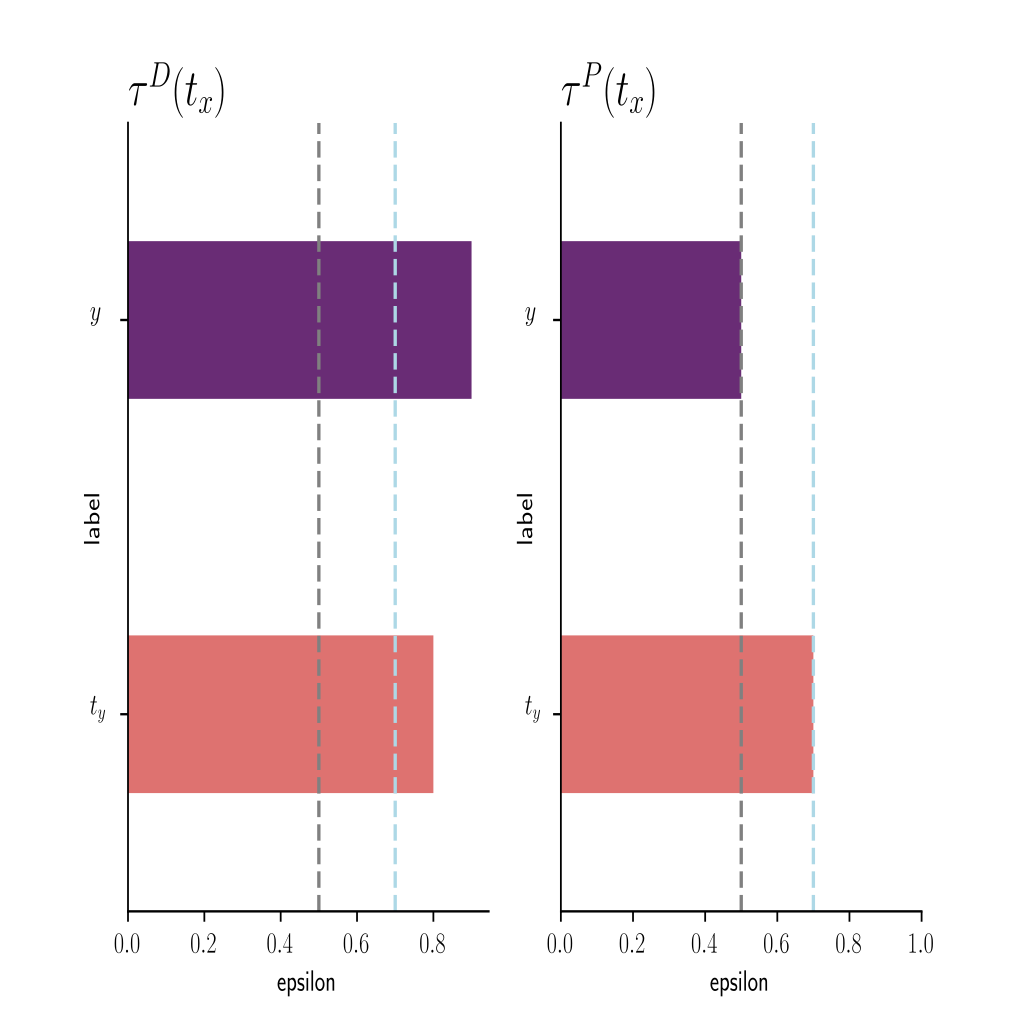}
  \end{center}
  The first problem is that for the entire region for which $t_y \in
  \Gamma^P$ and $y\notin\Gamma^P$ (between grey and lightblue),
  we have $t_y \in \Gamma^D$, hence not separable within that region.  Second, while the attack is formally effective, our confidence
  in $t_y$ decreased after the attack.  While this is not a problem for
  separation \emph{per se}, it is a
  problem for bounding our confidence in separation.
\end{example}

The above example indicates that, in addition to requiring $t_y$ to be the most confident label,
we need to take into account both $\tau^D,\tau^P$ simultaneously when creating
a meaningful notion of effective.  One possibility is to force relations like $p^P(t_x,t_y) > p^D(t_x,t_y)$.
We instead capture the effectiveness in a more general form, and to do so we make it quantitative.

\begin{definition}\label{definition:r-effective}
  An attack is \textbf{empirically $(1-r)$-effective} at $(x,y)$ given $D,P$ when
  \begin{enumerate}[{[$\mathsf{re}$.1]}]
    \item $p^D(t_x,t_y) < \min(r,p^D(t_x,y))$;
    \item $\max(\tau^P(t_x) - t_y ) \leq r < p^P(t_x,t_y)$,
  \end{enumerate}
  where we denote by $\tau^{P}(x)-y$ the vector obtained by
  projecting $\tau^{P}(x)$ onto the subspace corresponding to all but the $y$-th component.
\end{definition}

Intuitively, an attack is empirically $(1-r)$-effective when $\Gamma^P_\epsilon$
is at most the singleton $\{t_y\}$ and $t_y \not \in
\Gamma^D_\epsilon$ for $\epsilon > r$.

\begin{lemma}
  An empirically $(1-r)$-effective attack is both non-trivial and weakly effective.
\end{lemma}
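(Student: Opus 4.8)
The plan is to unwind the definitions and show that each of the two clauses in Definition~\ref{definition:r-effective} directly implies one of the two target conditions. Recall that $(1-r)$-effectiveness at $(x,y)$ gives us [$\mathsf{re}$.1] $p^D(t_x,t_y) < \min(r, p^D(t_x,y))$ and [$\mathsf{re}$.2] $\max(\tau^P(t_x) - t_y) \leq r < p^P(t_x,t_y)$. Non-triviality (Definition~\ref{definition:non_trivial}) asks for $p^D(t_x,t_y) < p^D(t_x,y)$, and weak effectiveness (Definition~\ref{definition:weakly_effective}) asks for $p^P(t_x,y) < p^P(t_x,t_y)$.

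For non-triviality: clause [$\mathsf{re}$.1] says $p^D(t_x,t_y)$ is strictly less than the minimum of $r$ and $p^D(t_x,y)$, so in particular $p^D(t_x,t_y) < p^D(t_x,y)$. That is exactly the non-triviality inequality, so nothing more is needed there.

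For weak effectiveness: I would use clause [$\mathsf{re}$.2]. The quantity $p^P(t_x,y)$ is one of the coordinates of the vector $\tau^P(t_x)$, and since $y \neq t_y$ (which we get from [$\mathsf{re}$.1] via the strictness argument already noted in the paper after Definition~\ref{definition:non_trivial}), $p^P(t_x,y)$ is in fact one of the coordinates of the projected vector $\tau^P(t_x) - t_y$. Hence $p^P(t_x,y) \leq \max(\tau^P(t_x) - t_y) \leq r < p^P(t_x,t_y)$, which chains to $p^P(t_x,y) < p^P(t_x,t_y)$, i.e.\ weak effectiveness.

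The only mild subtlety — and the one place I would be careful — is the bookkeeping that $y \neq t_y$, so that the $y$-coordinate genuinely survives the projection $\tau^P(t_x) - t_y$ and can therefore be bounded by its maximum; this is precisely why clause [$\mathsf{re}$.1] is stated with a strict inequality, and the paper has already flagged this consequence. With that in hand both conclusions are immediate, so there is no real obstacle; the lemma is essentially a sanity check that the quantitative notion refines the two qualitative ones.
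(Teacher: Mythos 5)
Your proposal is correct and follows essentially the same route as the paper: clause [$\mathsf{re}$.1] gives non-triviality via $p^D(t_x,t_y) < \min(r,p^D(t_x,y)) \leq p^D(t_x,y)$, and clause [$\mathsf{re}$.2] together with $y \neq t_y$ (which follows from non-triviality) gives $p^P(t_x,y) \leq \max(\tau^P(t_x)-t_y) \leq r < p^P(t_x,t_y)$, i.e.\ weak effectiveness. Your explicit attention to why $y \neq t_y$ lets the $y$-coordinate survive the projection is exactly the point the paper's proof also relies on.
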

\begin{proof}
  Suppose an attack is empirically $(1-r)$-effective.
  Since
  \begin{align*}
    p^D(t_x,t_y) < \min(r,p^D(t_x,y)) \leq p^D(t_x,y),
  \end{align*}
  the attack is non-trivial.
  Since $p^P(t_x,t_y) > r \geq \max(\tau^P(t_x) - t_y)$ we have that
  $p^P(t_x,t_y) > p^P(t_z,y)$ for any $y\ne t_y$.  Since the attack is non-trivial, $t_y\ne y$, hence
  the attack is weakly effective.
\end{proof}

The intuition behind a poisoned item being the most ``dominant feature'' can also be captured by requiring
that $p^P(t_x,t_y) > \max(\tau^D(t_x))$.  This kind of assumption,
which mirrors assumptions in \cite{khaddaj2023rethinking},
forces empirical $(1-r)$-effectiveness.

\begin{lemma}\label{lemma:attack-derived-effective-bound-typical}
  Suppose that an attack is non-trivial at $(x,y)$ and additionally satisfies
  \begin{align}
    \max(\tau^P(t_x)  - t_y) & < p^P(t_x,t_y)\\
    \max(\tau^D(t_x)) & < p^{P}(t_x,t_y).
  \end{align}
  Then the attack is empirically $(1-r)$-effective at $(x,y)$ for some $r$.
\end{lemma}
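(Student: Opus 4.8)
The plan is to exhibit an explicit value of $r$ — in general a random variable, since every quantity in sight is — that witnesses empirical $(1-r)$-effectiveness, and to check that the admissible range for such an $r$ is nonempty. Recall from Definition \ref{definition:r-effective} that empirical $(1-r)$-effectiveness at $(x,y)$ amounts to the two requirements $p^D(t_x,t_y) < \min(r, p^D(t_x,y))$ and $\max(\tau^P(t_x) - t_y) \le r < p^P(t_x,t_y)$. Non-triviality at $(x,y)$ (Definition \ref{definition:non_trivial}) already supplies $p^D(t_x,t_y) < p^D(t_x,y)$, so the first requirement reduces to $p^D(t_x,t_y) < r$, and the whole task becomes finding $r$ with
\[
  \max\bigl(p^D(t_x,t_y),\ \max(\tau^P(t_x) - t_y)\bigr) \;\le\; r \;<\; p^P(t_x,t_y),
\]
where the left-hand inequality must moreover be strict in the $p^D(t_x,t_y)$ slot.

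The key observation is that $p^D(t_x,t_y)$ is literally one of the components of the random vector $\tau^D(t_x)$, so $p^D(t_x,t_y) \le \max(\tau^D(t_x))$; combined with the second hypothesis $\max(\tau^D(t_x)) < p^P(t_x,t_y)$ this yields $p^D(t_x,t_y) < p^P(t_x,t_y)$. The first hypothesis gives $\max(\tau^P(t_x)-t_y) < p^P(t_x,t_y)$ outright. Hence, setting $L := \max\bigl(p^D(t_x,t_y),\ \max(\tau^P(t_x)-t_y)\bigr)$ and $U := p^P(t_x,t_y)$, we have $L < U$, so the half-open interval $[L,U)$ is nonempty and contains points strictly above $L$. (Here $\tau^P(t_x)-t_y$ is a vector with at least one component, since non-triviality forces $t_y \ne y$ and hence $|\Y| \ge 2$, so its maximum is well defined.) I would then take $r := \tfrac12(L+U)$, so that $L < r < U$; consequently $r > L \ge p^D(t_x,t_y)$, $r > L \ge \max(\tau^P(t_x)-t_y)$, and $r < U = p^P(t_x,t_y)$, which verifies $[\mathsf{re}.2]$ and, together with non-triviality, verifies $[\mathsf{re}.1]$ in the form $p^D(t_x,t_y) < \min(r, p^D(t_x,y))$. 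Since the $p$-value functions are measurable, $L$, $U$, and therefore $r$ are measurable, so this construction works pointwise in $\omega$ and produces a genuine random $r$.

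There is no deep obstacle here; the proof is essentially a book-keeping argument about where an interval's endpoints lie. The one place that genuinely needs care — and is the only real pitfall — is tracking which inequalities are strict: $[\mathsf{re}.1]$ demands the \emph{strict} bound $p^D(t_x,t_y) < r$, so one cannot get away with $r = L$ in the boundary case $L = p^D(t_x,t_y)$, whereas $[\mathsf{re}.2]$ only needs $\max(\tau^P(t_x)-t_y) \le r$ and the strict upper bound $r < p^P(t_x,t_y)$; choosing the midpoint of $[L,U)$ rather than its left endpoint reconciles all of these at once.
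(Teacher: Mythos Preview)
Your proof is correct and follows essentially the same approach as the paper: exhibit an explicit $r$ in the admissible interval determined by the hypotheses and verify $[\mathsf{re}.1]$ and $[\mathsf{re}.2]$. The only cosmetic difference is the particular witness chosen --- the paper takes $r := \max\bigl(p^D(t_x,y),\ \max(\tau^P(t_x)-t_y)\bigr)$ (note $y$, not $t_y$) and uses the identity $\min(\max(a,b),a)=a$ to collapse $[\mathsf{re}.1]$ directly to non-triviality, whereas you first reduce $[\mathsf{re}.1]$ and then pick the midpoint of $[L,U)$; both verifications go through for the same reasons.
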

\begin{proof}
  Define
  \begin{align*}
    r & := \max\left(p^D(t_x,y),\max(\tau^P(t_x) - t_y )\right).
  \end{align*}

  To begin, [$\mathsf{re}.2$] is an immediate consequence of (1) and (2).
  Then note that $\min(\max(a,b),a)=a$ for any $a,b$.  Hence
  $\min(r,p^D(t_x,y)) = p^D(t_x,y)$.  Then [$\mathsf{re}.1$] follows from non-triviality.
\end{proof}

In fact the $r$ defined in the proof of Lemma
\ref{lemma:attack-derived-effective-bound-typical} typically much larger than needed.  Indeed, in practice $p^D(t_x,t_y)$ and
$\max(\tau^D(t_x)-t_y)$ are both very small, and quite comparable.  We can then push the $r$ much lower
depending on the relationship that these two quantities have.

\begin{observation}
  Suppose an attack is non-trivial at $(x,y)$.  Then
  \begin{enumerate}[(i)]
    \item If conditions (1),(2) of Lemma
      \ref{lemma:attack-derived-effective-bound-typical} hold, then
      the attack is always empirically $1-r$ effective for any $r$ with $\max(p^D(t_x,t_y),\max(\tau^P(t_x)-t_y)) < r < p^P(t_x,t_y)$;
    \item If Lemma \ref{lemma:attack-derived-effective-bound-typical}.(1) holds and additionally, $p^D(t_x,t_y) < \max(\tau^P(t_x)-t_y)$, then the attack is $1-r$ effective for $r:= \max(\tau^P(t_x)-t_y)$;
    \item If Lemma \ref{lemma:attack-derived-effective-bound-typical}.(2) holds and additionally, $\max(\tau^P(t_x)-t_y) < p^D(t_x,t_y)$ then
      the attack is empirically $1-r$ effective for any $r$ with $p^D(t_x,t_y) < r < p^P(t_x,t_y)$.
  \end{enumerate}
\end{observation}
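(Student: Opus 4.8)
The plan is, in each of the three cases, simply to unwind Definition \ref{definition:r-effective} and check that conditions [$\mathsf{re}$.1] and [$\mathsf{re}$.2] hold for the claimed value (or range) of $r$. The only facts I need beyond non-triviality at $(x,y)$, i.e. $p^D(t_x,t_y) < p^D(t_x,y)$, are the trivial observations that the $t_y$-coordinate of the expiry vector $\tau^D(t_x)$ equals $p^D(t_x,t_y)$, so $\max(\tau^D(t_x)) \geq p^D(t_x,t_y)$, and that $\max(\tau^P(t_x)-t_y) = \max_{w \neq t_y} p^P(t_x,w)$. I would also record once, for use in cases (i) and (iii), that whenever condition (2) of Lemma \ref{lemma:attack-derived-effective-bound-typical} is assumed, the chain $p^D(t_x,t_y) \leq \max(\tau^D(t_x)) < p^P(t_x,t_y)$ shows the relevant interval of admissible $r$ is nonempty, so none of the claims is vacuous.

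For case (i) I fix any $r$ with $\max(p^D(t_x,t_y),\max(\tau^P(t_x)-t_y)) < r < p^P(t_x,t_y)$ and verify both conditions directly: [$\mathsf{re}$.2] holds because $\max(\tau^P(t_x)-t_y) < r < p^P(t_x,t_y)$ is immediate from the choice of $r$ (and the strict inequality certainly implies the non-strict one required), and [$\mathsf{re}$.1] holds because $p^D(t_x,t_y) < r$ by the choice of $r$ while $p^D(t_x,t_y) < p^D(t_x,y)$ is non-triviality, so $p^D(t_x,t_y)$ lies below the minimum of the two bounds. Case (ii) is the special instance $r := \max(\tau^P(t_x)-t_y)$: [$\mathsf{re}$.2] is clear since $r$ meets its own lower bound with equality and $r < p^P(t_x,t_y)$ is condition (1) of Lemma \ref{lemma:attack-derived-effective-bound-typical}, and [$\mathsf{re}$.1] follows from the extra hypothesis $p^D(t_x,t_y) < \max(\tau^P(t_x)-t_y) = r$ together with non-triviality. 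Case (iii), for $r$ with $p^D(t_x,t_y) < r < p^P(t_x,t_y)$, is symmetric: [$\mathsf{re}$.1] is immediate from the choice of $r$ and non-triviality, and for [$\mathsf{re}$.2] the extra hypothesis $\max(\tau^P(t_x)-t_y) < p^D(t_x,t_y) < r$ gives the left inequality while $r < p^P(t_x,t_y)$ gives the right, with nonemptiness of the window following from condition (2) of Lemma \ref{lemma:attack-derived-effective-bound-typical} as above.

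I do not expect a genuine obstacle here: the statement is a bookkeeping refinement of Lemma \ref{lemma:attack-derived-effective-bound-typical} in which the crude value of $r$ produced there is replaced by the essentially optimal threshold, and the entire content is simply tracking which of the quantities $p^D(t_x,t_y)$, $p^D(t_x,y)$, $\max(\tau^P(t_x)-t_y)$ is the binding one in [$\mathsf{re}$.1]. The only points to take care of are keeping strict versus non-strict inequalities aligned with Definition \ref{definition:r-effective} and confirming — via $\max(\tau^D(t_x)) \geq p^D(t_x,t_y)$ and condition (2) of Lemma \ref{lemma:attack-derived-effective-bound-typical} — that each claimed window for $r$ is nonempty.
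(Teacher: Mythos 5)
Your proof is correct and takes essentially the same route as the paper: in each case you unwind Definition \ref{definition:r-effective}, verify [$\mathsf{re}$.1] from the choice of $r$ together with non-triviality and [$\mathsf{re}$.2] from the choice of $r$ together with condition (1), and you justify nonemptiness of the admissible window via $p^D(t_x,t_y)\leq\max(\tau^D(t_x))<p^P(t_x,t_y)$ under condition (2), exactly as in the paper's argument (where this coordinate bound is used implicitly). No gaps.
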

\begin{proof}
  For (\emph{i}).  First note that from (1),(2) we have
  \[
    \max(p^D(t_x,t_y),\max(\tau^P(t_x)-t_y))\leq \max(\max(\tau^D(t_x)),\max(\tau^P(t_x)-t_y)) < p^P(t_x,t_y)
  \]
  so that the interval $\left(\max(p^D(t_x,t_y),\max(\tau^P(t_x)-t_y)),p^P(t_x,t_y) \right)$
  is non-empty and therefore such an $r$ exists.  [$\mathsf{re}.1$] then holds by non-triviality and construction of $r$ and [$\mathsf{re}.2$] holds by construction of $r$.

  For (\emph{ii}),[$\mathsf{re}.1$] holds by construction and non-triviality and [$\mathsf{re}.2$] follows immediately from (1).

  For (\emph{iii}), note that such $r$ exist by (2).  [$\mathsf{re}.1$] then holds by construction and non-triviality.  [$\mathsf{re}.2$]
  holds: $\max(\tau^P(t_x)-t_y) < p^D(t_x,t_y)<r<p^P(t_x,t_y)$ by assumption followed by the construction of $r$.
\end{proof}

Note that when an attack is $(1-r)$-effective we have the following.

\begin{lemma}\label{lemma:effective-forces-empty}
  Suppose an attack is empirically $(1-r)$-effective at $(x,y)$, given $D,P$.
  Then $\Gamma_{n}^{r}(D)(t_x) \cap \Gamma_{n}^{r}(P)(t_x) = \emptyset$.
\end{lemma}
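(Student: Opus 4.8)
The plan is to unfold the uncurried definition of the conformal prediction set from Observation~\ref{observation:uncurried-form-pred-set}, namely $\Gamma_{n}^{r}(D)(t_x) = \{y' \mid p^{D}(t_x,y') > r\}$ and likewise for $P$, and then show that the only label that could possibly lie in the intersection is $t_y$, which is in turn excluded on the $D$ side. First I would establish $\Gamma_{n}^{r}(P)(t_x) \subseteq \{t_y\}$: by condition [$\mathsf{re}.2$] of Definition~\ref{definition:r-effective} we have $\max(\tau^{P}(t_x) - t_y) \le r$, and since $\tau^{P}(t_x) - t_y$ collects exactly the values $p^{P}(t_x,y')$ for $y' \ne t_y$ (Definition~\ref{definition:expiration}), every such $y'$ satisfies $p^{P}(t_x,y') \le r$ and hence $y' \notin \Gamma_{n}^{r}(P)(t_x)$. (In fact $\Gamma_{n}^{r}(P)(t_x) = \{t_y\}$, since the same condition also gives $p^{P}(t_x,t_y) > r$, but only the inclusion is needed.)

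Next I would show $t_y \notin \Gamma_{n}^{r}(D)(t_x)$: by condition [$\mathsf{re}.1$] we have $p^{D}(t_x,t_y) < \min(r, p^{D}(t_x,y)) \le r$, so $p^{D}(t_x,t_y) \le r$ (indeed strictly below), whence $t_y$ fails the defining inequality of $\Gamma_{n}^{r}(D)(t_x)$. Combining the two steps, any element of $\Gamma_{n}^{r}(D)(t_x) \cap \Gamma_{n}^{r}(P)(t_x)$ must equal $t_y$ by the first step and must differ from $t_y$ by the second; therefore the intersection is empty.

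There is no substantial obstacle here — this lemma is a purely pointwise consequence of the definition of $(1-r)$-effectiveness — but the one point that requires care is keeping the strict versus non-strict inequalities aligned: the prediction set is defined by the open condition $p > r$, condition [$\mathsf{re}.2$] supplies the non-strict bound $p^{P}(t_x,y') \le r$ for every $y' \ne t_y$ (exactly enough to exclude those labels from the open set), and condition [$\mathsf{re}.1$] supplies $p^{D}(t_x,t_y) < r$ (excluding $t_y$ from the $D$-side set). I would also remark explicitly that both $\Gamma$'s are taken at the same threshold $r$ and at the same test point $t_x$, so that no appeal to exchangeability or to the validity bound of Theorem~\ref{theorem:vovk-validity} is required for this lemma; those enter only later when this emptiness event is shown to be rare.
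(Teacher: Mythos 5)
Your proof is correct and follows essentially the same route as the paper's: the paper likewise deduces $t_y \notin \Gamma_{n}^{r}(D)(t_x)$ from [$\mathsf{re}.1$] and $\Gamma_{n}^{r}(P)(t_x)=\{t_y\}$ from [$\mathsf{re}.2$], concluding the intersection is empty. Your version merely spells out the strict/non-strict inequality bookkeeping that the paper leaves implicit, which is a faithful elaboration rather than a different argument.
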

\begin{proof}
  Suppose an attack is $(1-r)$-effective at $(x,y)$.
  From \textsf{[re.1]}, $t_y \not \in \Gamma_{n}^r(D)(t_x)$.
  From \textsf{[re.2]}, $\Gamma_{n}^{r}(P)(t_x)=\{t_y\}$, completing the proof.
\end{proof}

From the above we know that an empirically effective attack forces the
intersection of two prediction sets to be empty.  However, we have not
shown that we should be surprised by this.  The following lemma shows
that under an assumption of independence of samples, this is a rare situation.

\begin{lemma}\label{lemma:ind-exch-bounds-empty-meas}
  Suppose that $\<D_1,(X_{n+1},Y_{n+1})\>,\<D_2,(X_{n+1},Y_{n+1})\>$, as $(\X\x\Y)^{n+1}$-valued random variables
  are independent in $(\X\x\Y)^{n+1}$. Suppose also that both
  sequences are  exchangeable.  Then observing $p_\cap(D_1,D_2,x) \leq \epsilon$,
  (i.e. $\Gamma_{n}^{\epsilon}(D_1)(x) \cap \Gamma_{n}^{\epsilon}(D_2)(x) = \emptyset$)
  is $1-(1-\epsilon)^2$-rare.  The probability of $p_\cap(D_1,D_2,\blank) \leq \epsilon$ is less than
  $1-(1-\epsilon)^2$.
\end{lemma}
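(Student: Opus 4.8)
The plan is to reduce the statement to Corollary~\ref{theorem:no-exch-strong-sep-intuitive}, after unwinding the definition of $p_\cap$ supplied by Algorithm~\ref{algorithm:intersection_test} and identifying the shared last coordinate $(X_{n+1},Y_{n+1})$ of the two sequences with the common tail variable $Z_\nu$ appearing in that corollary (so that the test input in the probability statement is $X_{n+1}$).

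First I would record the elementary identity read off directly from Algorithm~\ref{algorithm:intersection_test}, namely $p_\cap(D_1,D_2,x)=\max_{y\in\Y}\min\bigl(p^{D_1}(x,y),p^{D_2}(x,y)\bigr)$. Consequently $p_\cap(D_1,D_2,x)\le\epsilon$ holds precisely when no label $y$ satisfies both $p^{D_1}(x,y)>\epsilon$ and $p^{D_2}(x,y)>\epsilon$, i.e.\ precisely when $\Gamma_{n}^{\epsilon}(D_1)(x)\cap\Gamma_{n}^{\epsilon}(D_2)(x)=\emptyset$; this verifies the parenthetical in the statement. Dually, the complementary event $\{p_\cap(D_1,D_2,X_{n+1})>\epsilon\}$ is implied by the single event $\{Y_{n+1}\in\Gamma_{n}^{\epsilon}(D_1)(X_{n+1})\cap\Gamma_{n}^{\epsilon}(D_2)(X_{n+1})\}$, since to witness a non-empty intersection it suffices to exhibit one common label, and the shared tail label $Y_{n+1}$ is such a candidate.

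Next I would apply Corollary~\ref{theorem:no-exch-strong-sep-intuitive} with $Z_\nu:=(X_{n+1},Y_{n+1})$, with $m=n$, and with $\gamma=\epsilon$: the hypotheses there are exactly those assumed here, namely independence of $\langle D_1,Z_\nu\rangle$ and $\langle D_2,Z_\nu\rangle$ as sequence-valued random variables together with individual exchangeability of $D_1$ and $D_2$. Translating its conclusion into the thrice-uncurried notation of Observation~\ref{observation:uncurried-form-pred-set} gives $\mathbb{P}\bigl(Y_{n+1}\in\Gamma_{n}^{\epsilon}(D_1)(X_{n+1})\cap\Gamma_{n}^{\epsilon}(D_2)(X_{n+1})\bigr)\ge(1-\epsilon)^2$. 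Combining this with the event containment from the previous step yields $\mathbb{P}\bigl(p_\cap(D_1,D_2,X_{n+1})>\epsilon\bigr)\ge(1-\epsilon)^2$, and taking complements gives $\mathbb{P}\bigl(p_\cap(D_1,D_2,X_{n+1})\le\epsilon\bigr)\le 1-(1-\epsilon)^2$, which is the asserted rarity bound.

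Given the corollary, the proof is essentially bookkeeping, and that bookkeeping is the only place one can slip: one must keep straight the three uncurried incarnations of $\Gamma$ (as a subset of $\Z^{n+1}$, of $\Z$, and of $\Y$) and must use that the two sequences share their final coordinate — without a common tail variable there is no single label guaranteed to lie in both prediction sets and the bound genuinely fails. I would also flag that the argument delivers the inequality in the weak form $\le 1-(1-\epsilon)^2$; to obtain the strict inequality advertised in the last sentence one either reads that sentence non-strictly or appeals to a mild non-degeneracy of the law of $\min(p^{D_1},p^{D_2})$ at the level $\epsilon$.
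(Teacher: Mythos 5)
Your proposal is correct and follows essentially the same route as the paper: apply Corollary~\ref{theorem:no-exch-strong-sep-intuitive} with the shared tail $Z_\nu=(X_{n+1},Y_{n+1})$ and $\gamma=\epsilon$, then pass to the complementary event to get $\mathbb{P}(p_\cap\leq\epsilon)\leq 1-(1-\epsilon)^2$ (your containment argument, that $Y_{n+1}$ merely witnesses a non-empty intersection, is in fact slightly more careful than the paper's stated equality of events, and your remark about the strict versus non-strict inequality is apt). The only piece you omit is the paper's closing appeal to Theorem~\ref{theorem:correctness} to transfer the bound to the empirically \emph{observed} value of $p_\cap$, which is a minor addendum rather than a gap in the probabilistic argument.
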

\begin{proof}
  From Corollary \ref{theorem:no-exch-strong-sep-intuitive}, we know

  \[\pr{(X_{n+1},Y_{n+1})\in \Gamma_{n,A}^\epsilon(D_1) \cap \Gamma_{n,A}^{\epsilon}(D_2)} \geq (1-\epsilon)^2.\]

  Thus,
  \begin{align*}
    & \pr{p_\cap \leq \epsilon}
    = \pr{\left( \Gamma_{n,A}^\epsilon(D_1), \Gamma_{n,A}^\epsilon(D_2) \right)^c}\\
    &= 1 - \pr{\Gamma_{n,A}^\epsilon(D_1), \Gamma_{n,A}^\epsilon(D_2)}
    \leq 1 - (1-\epsilon)^2
  \end{align*}

  From, the empirical approximation, Theorem \ref{theorem:correctness},
  the probability of observing $p_\cap \leq \epsilon$ is then asymptotically bounded
  by $1-(1-\epsilon)^2$.
\end{proof}

The main theoretical result of this paper summarizes the observations in this section
as follows.  Note that to make the statement precise we need to move from
the statement about effectiveness at a point to a probabilistic statement.
The most straightforward approach considers the ratio at a point, to a probabilistic statement,
and for this
we can take the expectation of the indicator for the set
of $(x,y)$ at which the attack is effective.

We summarize the above in the following theorem.
\begin{theorem}\label{theorem:effective-implies-separable}
  Suppose that $D,P$ are samples from $(\X\x\Y)^{n+1}$, that $D$
  is exchangeable, and that $P$ is an
  effectively poisoned dataset with respect to $D$.  Then $D,P$
  are conformally $(\epsilon,\epsilon,(1-\epsilon)^2)$-separable.
\end{theorem}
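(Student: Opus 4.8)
The plan is to unwind the definition of conformal separability to an estimate on the probability that two conformal prediction sets have a non-empty intersection, and then to observe that effectiveness of the attack forces that intersection to be empty off a null event, so the probability in question collapses to $0$, which is trivially below $(1-\epsilon)^2$. Concretely, by the definition of $A$-conformal $(\epsilon,\epsilon,(1-\epsilon)^2)$-separability the goal is precisely
\[
  \mathbb{P}_{D,P}(\Gamma_{n,A}^{\epsilon},\Gamma_{n,A}^{\epsilon}) < (1-\epsilon)^2 .
\]
Using the uncurried bookkeeping of Observation \ref{observation:uncurried-form-pred-set} together with the device in Corollary \ref{theorem:no-exch-strong-sep-intuitive} — recall $D=D_C\uplus D_P$ and $P=D_C\uplus t(D_P)$, so the two length-$(n+1)$ sequences have a common test input (the triggered $t_x$) once we pass to the fully uncurried prediction sets $\Gamma_{n,A}^{\epsilon}(\cdot)(t_x)\subseteq\Y$ — this left-hand side is the measure of the event that the common test point lies in $\Gamma_{n,A}^{\epsilon}(D)(t_x)\cap\Gamma_{n,A}^{\epsilon}(P)(t_x)$.

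Next I would expand the hypothesis that $P$ is \emph{effectively poisoned} with respect to $D$. Following the ``expectation of the indicator'' formalization flagged just before the theorem, this says exactly that the attack is empirically $(1-\epsilon)$-effective at $(X_{n+1},Y_{n+1})$ almost surely. Lemma \ref{lemma:effective-forces-empty} then applies for a.e.\ realization: at every $(x,y)$ where the attack is empirically $(1-\epsilon)$-effective we get $\Gamma_{n}^{\epsilon}(D)(t_x)\cap\Gamma_{n}^{\epsilon}(P)(t_x)=\emptyset$, equivalently $p_\cap(D,P,t_x)\le\epsilon$. Hence this intersection is empty on an event of full probability, so
\[
  \mathbb{P}_{D,P}(\Gamma_{n,A}^{\epsilon},\Gamma_{n,A}^{\epsilon}) = 0 < (1-\epsilon)^2 ,
\]
which is the asserted separability. (If ``effectively poisoned'' is only taken to mean the effectiveness event has probability exceeding $1-(1-\epsilon)^2$, the same computation still gives $\mathbb{P}_{D,P}(\Gamma_{n,A}^{\epsilon},\Gamma_{n,A}^{\epsilon})\le\mathbb{P}(\text{attack not }(1-\epsilon)\text{-effective at the test point})<(1-\epsilon)^2$. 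Note the exchangeability of $D$ is not used in the collapse itself; it is what keeps the conformal $p$-values and Theorem \ref{theorem:vovk-validity} in force on the clean side, and it is the asymmetry between that validity and the forced emptiness above — against the lower bound of Lemma \ref{lemma:ind-exch-bounds-empty-meas} — that explains why $(1-\epsilon)^2$ is the natural threshold and why an effectively poisoned $P$ can never be independent of $D$.)

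The step I expect to be the main obstacle is the measure-theoretic bookkeeping that identifies $\mathbb{P}_{D,P}(\Gamma_{n,A}^{\epsilon},\Gamma_{n,A}^{\epsilon})$ with the intersection event that Lemma \ref{lemma:effective-forces-empty} annihilates: reconciling the ``shared last coordinate'' convention of Corollary \ref{theorem:no-exch-strong-sep-intuitive} with the fact that poisoning replaces $(x,y)$ by $(t_x,t_y)$ on $D_P$, so that $D$ and $P$ need not literally share a final entry. Making this rigorous will likely require conditioning on whether the test index falls in $D_C$ or in $D_P$, and in the latter case re-deriving the event in the fully uncurried form $\Gamma_{n,A}^{\epsilon}(\cdot)(t_x)\subseteq\Y$ of Observation \ref{observation:uncurried-form-pred-set} while checking measurability of the maps involved (as flagged in that Observation). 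Once the definition of ``effectively poisoned'' is pinned down, everything else is a direct appeal to Lemma \ref{lemma:effective-forces-empty}.
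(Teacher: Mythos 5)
Your proposal is correct and follows essentially the same route as the paper, which offers no separate written proof but presents the theorem as a summary of the preceding observations: unwind the definition of $(\epsilon,\epsilon,(1-\epsilon)^2)$-separability, pass from pointwise effectiveness to a probabilistic statement via the indicator, and invoke Lemma \ref{lemma:effective-forces-empty} (against the backdrop of Corollary \ref{theorem:no-exch-strong-sep-intuitive} and Lemma \ref{lemma:ind-exch-bounds-empty-meas}) to force the intersection event to have probability below $(1-\epsilon)^2$. Your added remarks — covering both readings of ``effectively poisoned'' and flagging the shared-last-coordinate bookkeeping when the test point is triggered — address exactly the points the paper leaves implicit.
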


\section{Experimental Results}\label{section:experimental}

We cover experimental results in this section.  Overall our goal is to demonstrate three points.  First, that our
empirical false negative (FNR) and false positive rates (FPR) are competitive with the state of the art in proactive defense
provided by recent work \cite{usenix23-poison-detection}.
Second, that our false negative rate is better than the theoretical bound and that our false positive rate is low.
Third, to correct an \emph{en passant} claim made by \cite{geiping2020witches}.  In this paper they claim that so-called
label flipping and watermarking attacks (wherein the attacker only changes labels or superimposes a target onto images) can be detected by an application of ``supervision''.  They point to generic work that applies conformal
prediction by throwing away any inferences that are not conformal at
every layer of a neural network.  However, neither a
direct connection to detecting poison by non-conformity nor any
claim that there is a valid test for poison is made.  Second, they suggest that clean-label attacks using a bi-level optimization cannot be detected
by supervision.  Our experimental results can be used to show that one can indeed detect such ``weak attacks''
by conformal supervision.  Moreover, we show that clean-label
attacks can also be detected. In particular,
we demonstrate that the gradient-matching ``Witches' Brew'' attack can be detected using Conformal Separability.
We summarize results comparing to state-of-the-art in Table
\ref{table:experimental-high-level}.  A detailed analysis of
these individual results is given below.
\begin{table}
  \centering
  \begin{tabular}{cccc}
    \toprule
    Task $@$ & Missed Detections & False Alarms & Poison Success \\
    Poison Rate & (FNR)($\%$) & (FPR) ($\%$) & Rate ($\%$)\\
    \midrule
    Patch GTSRB (SOA)    & $0.2$ & $0.4$ & \multirow{2}{*}{$99.98$}\\
    Patch GTSRB (Ours)   & $1.4$ & $0.0$ \\
    \hdashline[1pt/1pt]
    Patch CIFAR (SOA)    & $1.7$ & $1.8$ & \multirow{2}{*}{$100$} \\
    Patch CIFAR (Ours)   & $1.2$ & $1.6$  \\
    \hdashline[1pt/1pt]
    Witches' Brew (Ours) & $9.0$ & $3.0$ & $35$\\
    \bottomrule
  \end{tabular}
  \caption{Our separability analysis provides competitive detection rates against
  other proactive defenses.  We also can detect the Witches' Brew attack well.}
  \label{table:experimental-high-level}
\end{table}

\paragraph{Experimental Setup}

In order to compare with prior literature, we are using the CIFAR10
and GTSRB datasets. We start by training a WideResNet $28\times10$
model, and a three layer simple CNN for each dataset respectively from
scratch and initialize the weights via He initialization
\cite{he2015delving}. We then create patches with size matching
the filter size of the first layer of the model architecture to
train it to kill that filter following the method discussed in
\cite{khaddaj2023rethinking}.
Random samples are then poisoned with the patch in each epoch based
on the pre-specified poison rates for 200 and 100 epochs.  This
process was carried out for both of the datasets. The position of the patch does not
necessarily have to be fixed.  Figure \ref{figure:sample-patch}
shows sample CIFAR10 and GTSRB images with the patches applied. Note that the images
are resized by about 7$\times$ for visualization purposes. However,
the model is trained with much smaller images ($32 \times 32$) and
patches ($7 \times 7$). For the optimization, we use the Adam optimizer with weight decays
and learning rate schedulers. We use the cross-entropy loss
function. 

\begin{figure}
  \centering
  \begin{subfigure}{0.3\linewidth}
    \includegraphics*[width=0.8\linewidth]{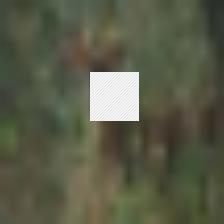}
  \end{subfigure}
  \begin{subfigure}{0.3\linewidth}
    \includegraphics*[width=0.8\linewidth]{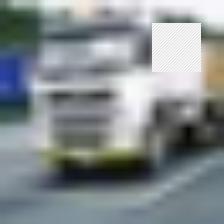}
  \end{subfigure}
  \begin{subfigure}{0.3\linewidth}
    \includegraphics*[width=0.8\linewidth]{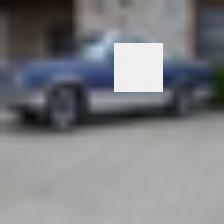}
  \end{subfigure}
  \begin{subfigure}{0.3\linewidth}
    \includegraphics*[width=0.8\linewidth]{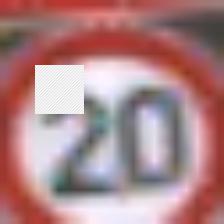}
  \end{subfigure}
  \begin{subfigure}{0.3\linewidth}
    \includegraphics*[width=0.8\linewidth]{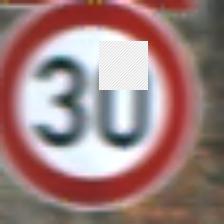}
  \end{subfigure}
  \begin{subfigure}{0.3\linewidth}
    \includegraphics*[width=0.8\linewidth]{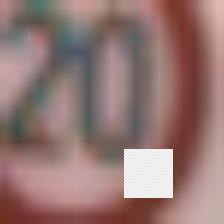}
  \end{subfigure}
  \caption{Sample CIFAR10 (top) and GTSRB (bottom) images with the patch applied.}
  \label{figure:sample-patch}
\end{figure}

\paragraph{Evaluation Metrics} Because we are interested in attacks
that satisfy the
\textit{effectiveness} requirement of our proposed framework,
we first look at \textit{attack success rates}.
The notion of success rate of an attack comes with some nuance. For the current paper,
we simply define the success rate at $S$ as the accuracy on a set of the form $\{(t_x,t_y) \, | \, (x,y) \in S \}$.
This, e.g., grossly over approximates the success rate since it does not take into account cases where $t_y = y$
nor the rate at which the clean model guesses $t_y$; however, it does seem to be the most common in the literature.

For evaluating our proposed Conformal Separability framework, we look
at both false negative and false positive rates. While we know by
Theorem \ref{theorem:effective-implies-separable} that we can
bound the false negative rate of detecting poison, it is worth investigating the false positive
rates. Additionally, \textit{attack optimality} is evaluated. The attacks we are interested in do not always push
towards optima on the poison rate spectrum.  I.e., such attacks need
not minimize imperceptibility (cf. Definition
\ref{definition:trigger_attack}).  This more relaxed position opens the door to a new
question: if poison is not administered at an optimal dose, can we detect ``enough'' poison
to render the attack inert?  Our analyses reveal an enticing point
on this front.  It seems that when poison is applied at either too
low or too high a rate the attack will not succeed.  This suggests
that there is a limited window of poison rates that yield effective attacks.  However, this introduces a hitherto unexplored
question: what is the maximum poison rate that we can make inert?  Of course, due to our validity guarantee,
we can always remove enough poison, but at what cost?  We will demonstrate bounds on the \textit{rates of poison} we
can deal with.

\subsection{Results on CIFAR10 and GTSRB }\label{subsection:direct_trigger}

We now consider our results in connection with the patch-based
poison attack summarized above on the CIFAR10 and GTSRB
datasets. Table \ref{table:poison-success-rate} displays the
relationship between poison rate, the
accuracy of the model on clean data and the success rate of the attack on poisoned data.
The first column provides the rate at which we altered data in the
training dataset. The second column shows that the attack does not alter validation
accuracy of the trained model on the canonical CIFAR10 or GTSRB test set.
The final column shows the success rate of the attack on CIFAR10 and GTSRB test set,
and demonstrates that attack rate has a breadth of successful deployment rates.
\begin{table}
  \centering
  \begin{tabular}{cccc}
    \toprule
    \textsc{Dataset}& \textsc{Poison Rate} (\%) & \textsc{Eval Clean} (\%) & \textsc{Success Rate} (\%)\\
    \midrule
    \rowcolor{orange!20}
    CIFAR10 & 0.0  & 93.35 & 10.0    \\
    CIFAR10 & 0.002 & 93.57 & 68.95   \\

    CIFAR10 & 0.01 & 93.35 & 97.57   \\
    CIFAR10 & 0.1 & 94.08 & 99.81   \\
    CIFAR10 & 0.2 & 94.01 &  100.00  \\
    CIFAR10 & 1 & 93.28 & 100.00   \\
    \hline
    \rowcolor{orange!20}
    GTSRB & 0.0  & 97.43 & 1.02  \\
    \rowcolor{orange!20}
    GTSRB & 0.002 & 97.65  & 1.54 \\
    GTSRB & 0.01 & 96.79  & 87.77 \\
    GTSRB & 0.1 &  97.78 & 99.57 \\
    GTSRB & 0.2 & 97.34  & 99.44 \\
    GTSRB & 1 & 97.92  & 99.98  \\
    \bottomrule
  \end{tabular}
  \caption{Test accuracy ($\%$) of a Wide ResNet $28 \times 10$ , and simple 3-layer CNN models trained from scratch respectively using clean and poisoned CIFAR10 and GTSRB
    data at multiple
    poison rates.  Eval clean shows that the model maintains the same performance on clean data.  Success rate
    shows that the attack becomes progressively more pronounced as the poison rate increases, taking a drastic leap
    in effectiveness from 0.002\% to 0.01\%.  The orange rows indicate
    the poison rates prior to the ``phase transition'' from ineffective
  to effective attacks.}
  \label{table:poison-success-rate}
\end{table}

To use Conformal Separability, we must identify a nonconformity score.  Previous work on
applying conformal prediction to neural networks (\cite{book:conformalReliable,book:vovkConformal})
used
\[
  A(D,x,y)  = -s_{x,y} \qquad \text{ or}\qquad
  A(D,x,y) = -\frac{\max_{j\ne y}s_{x, j}}{s_{x, y} +\gamma},
\]
where we write $s_{x,y} := \mathsf{softmax}(M_D(x))_y$ and $\gamma$ is a parameter used to adjust the sensitivity to the $y^\text{th}$ output \cite{paper:conformal-neural}.  These scores tended to produce slightly
less confident predictions, due to uniformly noisy, non-dominant outputs.  Also, the ``tuning'' parameter $\gamma$
becomes a sort of hyperparameter of the conformal prediction, which we want to avoid.
Yet such noise is captured well by
entropy.  Based on this observation, we consider more uniform distributions on the logits as
``less conformal'', and indeed such distributions
will have higher entropy. We then take as our
non-conformity score, the entropy on the logits, relative to the self-information on the output in question.
Explicitly, the non-conformity score we use is:
\[
  A(D,x,y) = -\frac{\mathsf{entropy}(\mathsf{softmax}(M_D(x)))}{s_{x,y} \cdot \ln(s_{x,y})}.
\]
Letting $T$ be the training set and $P=T_C\uplus t_{!}(T_P)$ be the poisoned set,
we apply Conformal Separability to $T$ and $P$ against a test set.


Recall that we derived a test for poison based on the Conformal
Separability Test, where we can confidently claim that a test item $x$
is poisoned when the measure of $\Gamma^{\epsilon}(T, x) \cap
\Gamma^{\epsilon}(T_r,x)$ where $T$ denotes a set of training data that is
known to be clean and $T_r$ denotes $T$ with the poison applied
at rate $r$.   For evaluation, we choose
hold-out sets of 500 poisoned and 500 clean items $(x,y)$ that were never seen in
training.  We then use the Conformal Separability Test, evaluated
over a range of poison rates $r$, to evaluate the intersections
$\Gamma^{\epsilon}(T,x)\cap\Gamma^{\epsilon}(T_r,x)$.  A
\emph{positive} identification of a purported poisoned item occurs
when this intersection is empty and a \emph{negative} when it is
non-empty.  Thus, a false positive occurs when a clean held-out
sample $x$ is identified as a positive and a false negative occurs
when a poisoned held-out sample $x$ is deemed a negative. The
resulting false
negative and positive rates at various conformal thresholds
$\epsilon$ and poison rates are recorded in Table
\ref{table:conformal-fnr-fpr-entropy}. As expected, false negative
rates decrease dramatically as soon as the poison rate has increased
to a sufficient level to yield an effective attack.  Interestingly,
in the case of CIFAR10, 9.8\% of the poisoned items are correctly
identified even when $r=0$ so that $T_r = T$.  I.e.,
$\Gamma^{\epsilon}(T,x)=\emptyset$ for 9.8\% of the
poisoned items $x$ from the CIFAR10 hold-out set.

\begin{table}
  \centering
  \begin{tabular}{cccccccc}
    \toprule
    & & \multicolumn{3}{c}{\textsc{False Negative}} &
    \multicolumn{3}{c}{\textsc{False Positive}} \\
    & & \multicolumn{3}{c}{\textsc{Rate} (\%)} & \multicolumn{3}{c}{\textsc{Rate} (\%)} \\
    \cmidrule(lr){3-5} \cmidrule(lr){6-8}
    & & \multicolumn{3}{c}{\small Conformal Threshold} &
    \multicolumn{3}{c}{\small
    Conformal Threshold}\\
    \textsc{Dataset} & \textsc{Poison rate} (\%) & 0.1  & 0.05 & 0.01 & 0.1 & 0.05 & 0.01 \\
    \midrule
    \rowcolor{orange!20}
    CIFAR10 & 0.0   & 90.2 & 90.2 & 90.2 & 0.0 & 0.0 & 0.0 \\
    CIFAR10 & 0.002 & 47.8 & 48.0 & 48.8 & 0.4 & 0.4 & 0.4 \\
    CIFAR10 & 0.01  & 6.2 & 6.4 & 6.6 & 0.0 & 0.0 & 0.0 \\
    CIFAR10 & 0.1   & 2.2 & 2.4 & 2.4 & 0.8 & 0.8 & 0.8 \\
    CIFAR10 & 0.2   & 1.2 & 1.4 & 1.4 & 1.0 & 1.0 & 1.0 \\
    CIFAR10 & 1.0   & 1.2 & 1.4 & 1.4 & 1.6 & 1.6 & 1.6 \\
    \hline
    \rowcolor{orange!20}
    GTSRB & 0.0   & 100 & 100 & 100 & 0.0 & 0.0 & 0.0 \\
    \rowcolor{orange!20}
    GTSRB & 0.002 & 99.5 & 99.7 & 100 & 0.16 & 0.16 & 0.0 \\
    GTSRB & 0.01  & 27.1 & 27.5 & 43.8 & 0.16 & 0.16 & 0.0 \\
    GTSRB & 0.1   & 2.2 & 2.3 & 14.1 & 0.0 & 0.0 & 0.0 \\
    GTSRB & 0.2   & 3.3 & 3.5 & 15.4 & 0.0 & 0.0 & 0.0 \\
    GTSRB & 1.0   & 1.4 & 1.7 & 14.3 & 0.0 & 0.0 & 0.0 \\
    \bottomrule
  \end{tabular}
  \caption{Evaluation of FNR and FPR on CIFAR10 and GTSRB where the
    evaluation is carried out on held-out poisoned and clean samples $x$
    (each of size 500). Following the Conformal Separability Test,
    the intersection
    $\Gamma^{\epsilon}(T,x)\cap\Gamma^{\epsilon}(T_r,x)$ is
    evaluated for each poison rate $r$ and conformal threshold
    $\epsilon$ considered, where $T$ is the respective training set
    and $T_r$ is the result of poisoning $T$ at rate $r$.
    Our FNRs for effective attacks come in lower than the theoretical bound established in
    Corollary \ref{theorem:no-exch-strong-sep-intuitive}.  Indeed, the sharp drop in
    FNR is correlated with
    the increase in success  rate from Table
    \ref{table:poison-success-rate}.  Even for unsuccessful attacks,
    we catch some of the poisoned samples. Additionally, our false positive rate shows that our test does not throw
    out good data.  The orange rows indicate
    the poison rates prior to the ``phase transition'' from ineffective
  to effective attacks.}
  \label{table:conformal-fnr-fpr-entropy}
\end{table}

\subsection{Witches-brew attack on CIFAR10}

One advantage of the Conformal Separability Test (Algorithm
\ref{algorithm:intersection_test}) is that
it opens the door to detection of attacks that might be too subtle in
their manipulation of the training data to be readily detected.  We
experimented with one such example,  the Witches Brew attack
\cite{geiping2020witches}, that poisons a dataset to
alter the classification of a fixed single target image $x_t$ by making
very small adjustments to a few (other) images.  That is, it can be
regarded as a kind of ``distributed'' version of an evasion attack in
which the distribution is shifted by small perturbation of several
images in order to yield misclassification of the target image.  In
general, the target image itself will not be manipulated in this
attack. The attack parameters are the poison rate, attack distance, and attack size.  The poison rate is as defined above.  Another parameter
of the the attack is the attack distance $\delta>0$ which provides a bound on the total amount of
change that each image can undergo (the attack stays with in a ball of
radius $\delta$ centered on the image being manipulated). The attack
size provides a budget for the number of pixels that can be
manipulated.  After setting these parameters, there are five ingredients
provided to the attack, as depicted in Figure \ref{fig:wb}.
\begin{figure}
  \centering
  \includegraphics*[scale=0.5]{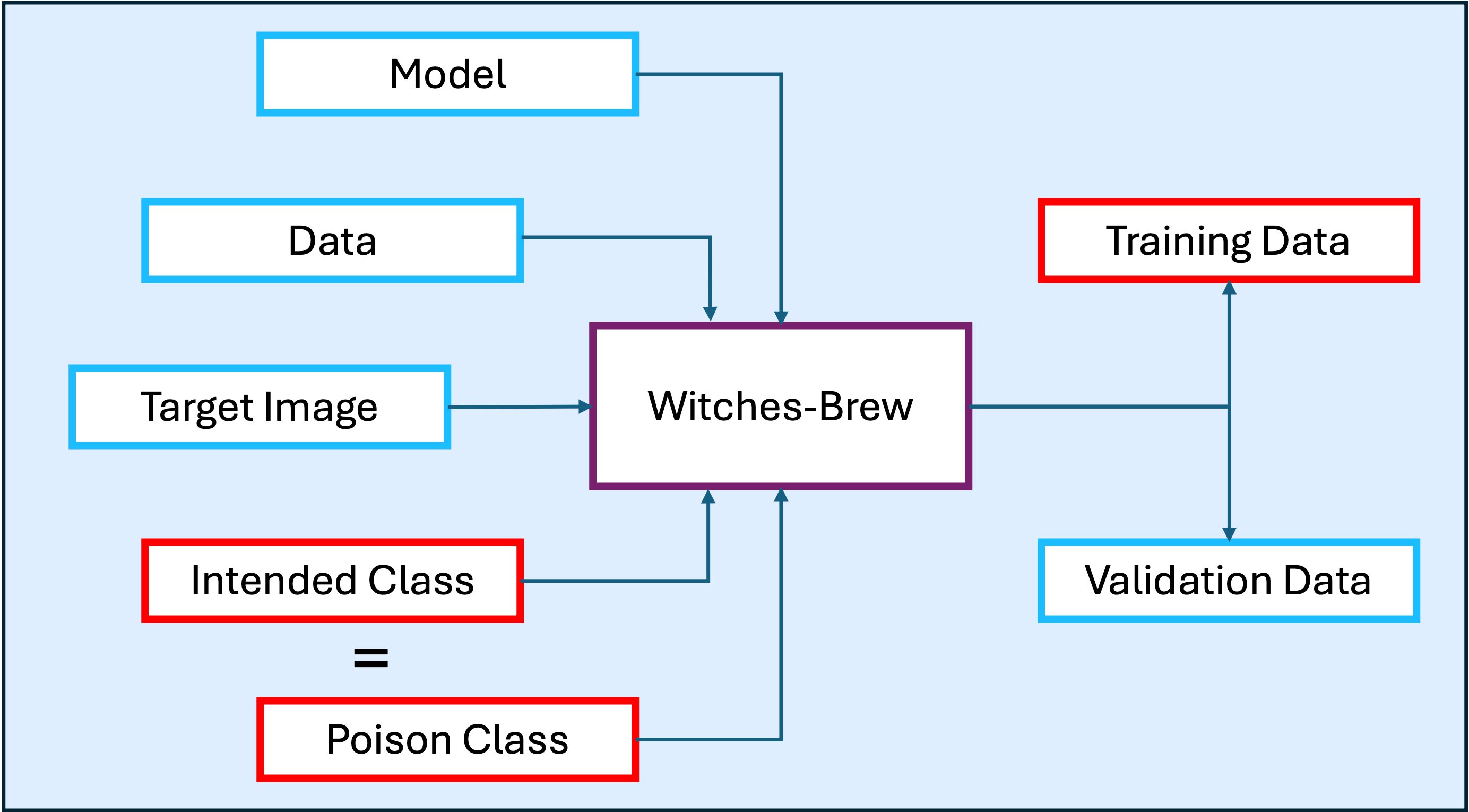}
  \caption{The Conformal Separability Test can be used to detect
    attacks that make very few changes to the training set.  We
    experimented with and were able to detect the Witches' Brew attack
    \cite{geiping2020witches} which pushes a target image to be
    misclassified by a model trained on the poisoned data, but where the
    target image and its label are not manipulated in the training set.
  Shown here are the inputs and outputs of the attack.}
  \label{fig:wb}
\end{figure}
The target image $x_t$ can be an image from any class.  The
\emph{intended class} is the class that we want the model to assign
to $x_t$.  For example, given
an image $x_t$ of a Labrador retriever and the intended class, ``sea lion'',
the attack aims to cause $x_t$ to be classified as ``sea lion.''  The
\emph{poison class} is the class from which images will be poisoned.
Note that the poison class need not be the actual class of the target
image. In our experiments we only considered the case where
the intended class and the poison class are identical, which was also
the setting of the experiments reported in \emph{ibid}.

Since the attack targets pre-selected images, we cannot simply apply the attack to additional test images after model training.
Thus, this is an expensive attack to carry out.  To analyze the success rate and detection capabilities of this attack,
we train clean and poisoned model pairs 100 times, and test on attacked images.
We fix a poison rate of $1.1\%$.  The attack success rate over the 100 tests is $35\%$.
With a conformal threshold of $0.15$, we obtain a false negative rate of $9\%$ thus
reducing the attack effectiveness to $3.5\%$.  The false positive rate is $3\%$.  Thus,
if we applied the mitigation to the data we would still have enough
data to adequately train. Thus, we have demonstrated that it is
possible to detect and mitigate even attacks that manipulate training data in
such a way as to impact performance on \emph{other} unmanipulated
data.  Such attacks are \emph{a priori} harder to detect than common
patch attacks.

\section{Conclusion and Future work}

In this paper we established that moving to a view of dataset poisoning as a stochastic problem captures
poisoning attacks well.  Indeed, we showed that the definition of a trigger attack can be formulated directly using
notions from distribution-free statistics (Definitions \ref{definition:non_trivial},\ref{definition:weakly_effective},\ref{definition:r-effective}).
We provided theoretical evidence (Theorem \ref{theorem:effective-implies-separable}) that these definitions
adequately capture the notion of poison.  Indeed for datasets $D_1,D_2$ which may be different and may have different
random errors in the models they produce, we showed that the statistical geometry of their conformal prediction sets
determines poison with provably bounded error on missed detection.  Moreover, we made no assumptions on the function
guiding the attack meaning that this bounded detection rate works, for example, even against adversaries that create
poisons using Turing-undecidable functions.  We also provided experimental evidence
that this test does not capture superfluous information that causes the geometry to capture too much more than the poisoned items
by showing experimentally low false positive rates (Section \ref{section:experimental}).  We also demonstrated experimentally
that even subtle ``clean-label'' attacks do admit a computationally effective test.  Finally, our theoretical and
experimental results provide a sound foundation for practical security against attacks in a scenario where the goal
is to preserve the integrity of a clean set of training data (the so-called proactive safety goal).

On the theoretical side, we showed that no poisoned item can be exchangeable with a clean distribution of data.  However,
we did not rule out that no poisoned item can be identically distributed.  Indeed one way to obtain this might be
to craft a poison which acts, for all intents and purposes as a singularity in the distribution, so that the surface of
poisoned items has measure zero.  For certain neural network models
and activation functions, it is not entirely
clear that this is possible. One way to achieve this would be to have a catastrophic surface that produces a cusp around
a poisoned item, which, upon projection in a future layer shows up as a proper singularity (to within floating point arithmetic).
Another possibility would be to construct clean label attacks that are identically distributed;
for example an attack like \cite{shafahi2018poison}.  One could imaging crafting such a scenario
that forces all items in a dataset to be in-distribution, but where the order and frequency are amiss (hence breaking a symmetry invariance).
It would certainly be interesting to know
conditions that permit identically distributed poisons.

On the theoretical and practical sides, it would be interesting to directly model or explore the surface of the
intersection of conformal prediction sets as a kind of statistical surface.  In the label classification example,
it would be interesting to know whether this surface must put a dynamics on the state space that pulls items absolutely
close together.  Given some assumptions on the distribution, for example, Lipschitz, it may be possible to then
further our result and derive a provable guarantee for detection methods for a class of poisons which rely on distance
(e.g. a k-nearest-neighbors test) in some latent space.  This could also then provide a way to bootstrap the test with even
less clean data, and for some classes of poison, possibly with no clean data required to boot.

We showed that having access to some clean data allows us to develop a practical test for poison and is immediately
applicable to proactively protecting datasets.  What we do not currently know is any kind of lower bound on the
amount of clean data required.  It would be theoretically interesting if there is some class of poison attacks
which provably require \emph{some} clean data to detect; it would also be interesting to relate the amount of clean data required
to some kind of complexity measure of the model.

One curiosity we discovered in performing our experiments is that often the relationship between poison-rate
and poison effectiveness is sharp.  For example, for both the GTSRB and CIFAR datasets, there seems to be almost
a phase transition in how a small amount of increase in poison-rate affects the success rate (see Table \ref{table:poison-success-rate}) at around a $0.002$ poison rate.
To our knowledge, there has been no analysis of phase transitions in the study of dataset poisoning, and it would be interesting to know
when there can be no constant bounding the relationship of poison-rate to success-rate.

\printbibliography

\end{document}